\newcommand{\bea}{\begin{eqnarray}}
\newcommand{\eea}{\end{eqnarray}}
\def\beaa{\begin{eqnarray*}}
\def\eeaa{\end{eqnarray*}}
\def\ba{\begin{array}}
\def\ea{\end{array}}
\def\be#1{\begin{equation} \label{#1}}
\def \eeq{\end{equation}}
\def\a{{\alpha}}
\def\b{{\beta}}
\def\be{{\beta}}
\def\de{\delta}
\def\ep{\epsilon}
\def\eps{\epsilon}
\def\la{\lambda}
\def\Si{\Sigma}
\def\Om{\Omega}
\def\al{\alpha}
\def\c{\cdot}
\def\rh{{\rho}}
\def\AA{{\mathcal A}}
\def\Aa{{\mathcal A}}
\def\BB{{\mathcal B}}
\def\Bb{{\mathcal B}}
\def\Cc{{\mathcal C}}
\def\MM{{\mathcal M}}
\def\EE{{\mathcal E}}
\def\HH{{\mathcal H}}
\def\SS{{\mathcal S}}
\def\Ss{{\mathcal S}}
\def\RR{{\mathbb R}}
\def\AA{{\mathcal A}}
\def\D{{\bf D}}
\def\T{{\bf T}}
\def\g{{\bf g}}
\def\RRR{{\mathbb R}}
\def\f12{{\frac 1 2}}
\def\f{\widetilde{f}}
\newtheorem{theorem}{Theorem}[section]
\newtheorem{lemma}[theorem]{Lemma}
\newtheorem{proposition}[theorem]{Proposition}
\newtheorem{definition}[theorem]{Definition}
\newtheorem{remark}[theorem]{Remark}
\numberwithin{equation}{section}
\begin{document}

\title[Uniqueness results for ill posed characteristic problems]
{Uniqueness results for ill posed characteristic problems in curved space-times}
\author{Alexandru D. Ionescu}
\address{University of Wisconsin -- Madison}
\email{ionescu@math.wisc.edu}
\author{Sergiu Klainerman}
\address{Princeton University}
\email{seri@math.princeton.edu}
\begin{abstract}
We prove two uniqueness theorems concerning 
linear wave equations; the first theorem is
in Minkowski space-times, while the second is in the domain of outer communication of a Kerr black hole. Both theorems concern
 ill-posed  Cauchy problems  on bifurcate,  characteristic 
hypersurfaces. In  the case of the Kerr space-time, the hypersurface is precisely the event horizon of the black hole. The uniqueness theorem in this case, based on two
Carleman estimates,  is intimately connected to our strategy to prove uniqueness of the Kerr black holes among smooth, stationary solutions of the Einstein-vacuum equations, as formulated in 
\cite{Ion-K}.
\end{abstract}

\maketitle

\tableofcontents
\section{Introduction}\label{section1}

The goal of the paper is to prove two uniqueness results for the  Cauchy  problem in the 
exterior of a  bifurcate characteristic surface. In the  simplest  case  of 
the wave equation in   Minkowski space  $\RRR^{1+d}$,
\beaa
 \square \phi=0,\qquad   \qquad \square=  -\partial_  t^2+\sum_{i=1}^d \partial_{i}^2
 \eeaa
  the  problem  is to find  solutions  
 in the  exterior   domain  $\EE_a=\{(t,x): |x|>|t|+a\}$, $a\ge 0$,   with prescribed data
   on the boundary  $\HH_a=\{(t,x): |t|=|x|+a\}$.
    The problem is known to be \textit{ill posed}, that is, 
   \begin{enumerate}
   \item Solutions may not exist for  smooth, non-analytic, initial conditions.
   \item There is no continuous dependence on the data.
   \end{enumerate}
The situation is similar to the better known case of 
   the Cauchy problem prescribed on a  time-like  characteristic
   hypersurface, such as $x^d=0$.  The Cauchy--Kowalewski theorem
   allows one to solve the problem for analytic initial data, but 
   solutions may not exist in the smooth case. It is known in fact
   that smooth solutions cannot be prescribed freely, since certain
   necessary compatibilities may be violated. 

  Though existence fails, one can often prove uniqueness.
  A general  result  due to Holmgren,  improved by F.  John  \cite{John1},  shows that  the  non-characteristic initial value problem for linear  equations with analytic   coefficients is  locally unique   in the class of smooth 
  solutions, see \cite{John2}.  The case of equations with 
  smooth coefficients is considerably more complicated.
  An important  counterexample to uniqueness was provided by P. Cohen \cite{Cohen}, see also \cite{Ho2} and \cite{A-B} for more general families of examples. Thus, in the case  of the Cauchy problem for a  time-like  hypersurface   (such as  $x^d=0$), 
   even a zero  order,  smooth,   perturbation of the wave operator $\square$  can cause uniqueness  to fail. We  note also  that, 
      there  cannot be, in general (unless one considers solutions with  suitable decay at infinity 
      such as discussed in \cite{KeRuSo}),   unique continuation across   characteristic hyperplanes, see the counterexample and the discussion in \cite[Theorem 8.6.7]{Ho}.  On the other hand,  there exist conditions   which can guarantee uniqueness,  most importantly those of H\"{o}rmander \cite[Chapter 28]{Ho}. See also \cite{RoZu},  \cite{Ta} and the references therein for uniqueness results under partial analyticity assumptions. These results prove uniqueness for a large class of problems which include, in particular, 
       the Cauchy problem  on an arbitrary,  non-characteristic, 
     time-like hypersurface for the wave equation 
     $ \square_\g\phi =0$,
       corresponding to
    a time independent Lorentz metric of the form
    $
    -\g_{00}(x) dt^2+\g_{ij}(x) dx^i dx^j$  with $ \g_{00}>0$
    and $(\g_{ij})_{i,j=1}^d$ positive definite.     
    The  method  of proof for  these and other  modern  unique continuation results is based on Carleman  type  estimates.

  The case of  ill posed problems  for 
   bifurcate characteristic hypersurfaces, i.e. surfaces composed of two characteristic hypersurfaces which  intersect transversally, seems to have been
     first studied  by Friedlander\footnote{In  \cite{Fried3} he also considers  a similar, ill posed,  characteristic problem at infinity, 
     concerning  uniqueness of solutions with identical  radiation fields.} \cite{Fried1}, 
     by using a variation  of Holmgren's method of proof. The  same problem for  equations with smooth coefficients, seems not to      have been  specifically   considered in the literature.
    Yet   it is precisely this case  which seems to be of  considerable   importance 
     in General Relativity, particularly for the   problem of uniqueness of stationary, smooth  solutions of the Einstein field equations,  see discussion  in  \cite{Ion-K}.  
     Indeed, it turns out that  remarkable simplifications occur   for the geometry
     of  bifurcate horizons for general, stationary, asymptotically flat  black hole solutions of the Einstein-vacuum equations, verifying reasonable regularity assumptions.  For such regular black hole space-times, Hawking has shown, see \cite{H-E},  then   there must exist an additional  Killing 
     vector-field defined on the event horizon, tangent to the generators of the horizon.  In the case when the space-time is real  analytic one can extend this additional Killing vector-field to the entire exterior region, and deduce that the space-time  must be not only 
      stationary but also  axially symmetric.    A satisfactory  uniqueness result (due to
      Carter \cite{Ca1} and Robinson \cite{Rob}) 
       is known for  stationary  solutions  which have this additional symmetry. However, in the  smooth, non-analytic case, the problem of extending Hawking's Killing vector-field from the horizon to the exterior region leads to an  ill posed   characteristic problem. This appears to be the key obstruction to proving the analogue of Hawking's uniqueness theorem in the class of smooth, non-analytic space-times.

     Motivated by this latter problem, to avoid the analyticity assumption  we are proposing a completely different approach\footnote{See the longer discussion in \cite{Ion-K}.} based on the following
 ingredients.
 \begin{enumerate} 
 \item The    Kerr space-times  can be locally characterized,   among stationary solutions,  by the vanishing of   a four covariant tensor-field, 
 called the  Mars-Simon tensor $\Ss$ introduced in \cite{Ma1}. 
 \item The Mars-Simon tensor-field  $\Ss$ verifies a covariant system of  wave equation of the form (see also first equation in  \eqref{va4}) ,
 \begin{equation}
 \square_\g  \Ss= \AA\cdot \D \Ss+ \BB\cdot \Ss. \label{eq1}
 \end{equation}
 Moreover, since $\g$ is stationary, we know that there exists a globally defined
 Killing vector-field $\xi$, which is time-like at space-like infinity. Thus it is easy to verify that  the Lie derivative of $\Ss$ with respect to $\xi$  vanishes identically.
  \begin{equation}
  \mathcal L_\xi\Ss=0.\label{eq2}
  \end{equation}
 \item One can show that the tensor-field  $\Ss$ vanishes identically   on the bifurcate  horizon  $\HH$  of the stationary metric $g$. We show this by making an assumption
  (automatically satisfied on a Kerr metric)
 concerning  the vanishing of a   complex scalar  on the bifurcate sphere of the horizon.
 \item  Using a first Carleman estimate for the covariant wave equation
 \eqref{eq1}  we show that $\Ss$ vanishes in  a neighborhood of  the bifurcate
 sphere. This step  does not require condition \eqref{eq2}, indeed it is a result
 that applies to general equation of type \eqref{eq1} in a neighborhood of a 
 regular    bifurcate characteristic   hypersurface,  
  for a general  Lorentz  metric $\g$.
  \item To extend the vanishing of $\Ss$ to the entire domain of outer communication we need a more sophisticated 
  Carleman estimate which depends  in an essential fashion,  among other considerations,    on equation  \eqref{eq2}. 
 \end{enumerate} 

In this paper we prove, see Theorem \ref{Main1kerr}, a  global uniqueness     result 
     for   tensor-field  solutions to  covariant equations of the form \eqref{eq1} and \eqref{eq2} on   the domain of outer communication of a Kerr background,  which vanish on the event horizon. The condition \eqref{eq2} relative to the  stationary Killing 
  vector-field  $\xi$, which is important to prove a global result, is  justified by the fact  that  the problem of uniqueness of Kerr is restricted, naturally, to stationary solutions of the Einstein vacuum
     equations (see discussion in  \cite{Ion-K}). We also discuss a simple model problem, see Theorem \ref{Main1}, concerning  scalar  linear  wave equations in the exterior domain $\EE=\EE_1$ of the Minkowski space-time  with prescribed  data  on the characteristic boundary $\HH=\HH_1$.

  We would like to thank  A. Rendall  for bringing to our attention 
  the work of Friedlander, \cite{Fried1}, \cite{Fried3}.

\subsection{A model problem  in Minkowski spaces}\label{minmainthm}

Assume $d\geq 1$ and let $(\mathcal{M}=\RR\times \RR^d,{\mathbf{m}})$ denote the usual Minkowski space of dimension $d+1$. We define the subsets of $\MM$
\begin{equation}\label{lo1}
\mathcal{E}=\{(t,x)\in\MM:\,|x|>|t|+1\},
\end{equation}
and
\begin{equation}\label{lo2}
\HH=\delta(\EE)=\{(t,x)\in\MM:\,|x|=|t|+1\}.
\end{equation}
Let $\overline{\EE}=\EE\cup\HH$. Our first theorem concerns a uniqueness property of solutions of wave equations on $\mathcal{E}$.

\begin{theorem}\label{Main1}
Assume $\phi\in C^2(\mathcal{M})$, $A,B^l\in C^0(\mathcal{M})$, $l=0,\ldots,d$, and
\begin{equation}\label{lamin2}
\square \phi=A\cdot \phi+\sum_{l=0}^dB^l\cdot \partial_l\phi\,\,\text{ on }\,\,\mathcal{E}.
\end{equation}
Assume that $\phi\equiv0\text{ on }\mathcal{H}$. Then $\phi\equiv 0$ on $\overline{\EE}$.
\end{theorem}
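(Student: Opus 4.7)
The plan is to combine a local Carleman estimate near the bifurcate sphere $S = \HH\cap\{t=0\}$ with transport and energy arguments to spread vanishing of $\phi$ throughout $\overline{\EE}$. Use the optical functions $u = |x|-t$ and $\underline u = |x|+t$: the two smooth null branches of the horizon are $\HH^+=\{u=1,\underline u\ge1\}$ and $\HH^-=\{\underline u=1,u\ge1\}$, they meet transversally along $S=\{u=\underline u=1\}$, and $\EE=\{u>1,\underline u>1\}$.

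The heart of the argument is a Carleman estimate on a small neighborhood $U$ of $S$ in $\overline{\EE}$, built from a weight $f = f(u,\underline u)$ that is pseudo-convex with respect to $\square$ in H\"ormander's sense and that degenerates on $\HH$. A natural ansatz is
\begin{equation*}
f = -\log\bigl((u-1)(\underline u-1)+\varepsilon\bigr) + h(u,\underline u),
\end{equation*}
with a polynomial correction $h$ tuned to secure sub-ellipticity of the conjugated operator $e^{\lambda f}\square e^{-\lambda f}$. The target is an inequality of the schematic form
\begin{equation*}
\lambda\int_U e^{2\lambda f}\bigl(|\phi|^2+|\D\phi|^2\bigr)\,dV \ \le\ C\int_U e^{2\lambda f}\,|\square\phi|^2\,dV,\qquad \lambda\ge\lambda_0,
\end{equation*}
valid for any $C^2$ function $\phi$ vanishing on $\HH\cap U$ (so that the boundary contributions in the integration by parts drop). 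Substituting the equation for $\square\phi$ on the right and absorbing the lower order terms into the left for $\lambda$ large forces $\phi\equiv 0$ on $U$.

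The vanishing next spreads to all of $\HH$. On $\HH^+$ all tangential derivatives of $\phi$ vanish, and a direct computation of $\square$ in null coordinates reduces the restricted equation to a linear first-order transport equation for $\partial_u\phi$ along the null generators; with initial value $\partial_u\phi|_S=0$ supplied by the Carleman step, ODE uniqueness forces $\partial_u\phi\equiv 0$ on $\HH^+$, and symmetrically $\partial_{\underline u}\phi\equiv 0$ on $\HH^-$, so the full 1-jet of $\phi$ vanishes on $\HH$. Finally, foliate $\EE$ by the space-like hyperboloids $\Si_c = \EE\cap\{|x|^2-t^2 = c^2\}$, $c>1$, each bounding with $\HH$ a compact region $\Omega_c\subset\overline{\EE}$. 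The standard energy identity on $\Omega_c$ with multiplier $\partial_t$, coupled with the vanishing of the horizon flux (quadratic in the 1-jet of $\phi$) and Gronwall's inequality in $c$ (initialized by $c\downarrow 1$, in which regime $\Si_c\subset U$), yields $\phi\equiv 0$ on each $\Si_c$ and hence on all of $\EE$.

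The main obstacle is the construction of $f$: since $\HH$ is characteristic for $\square$, the classical H\"ormander weights for non-characteristic unique continuation do not apply, and one must exploit the transverse intersection of $\HH^+$ and $\HH^-$ at $S$ to engineer a weight whose level sets in $\EE$ are pseudo-convex despite each $\HH^\pm$ being null. This is the flat-space analogue of the first Carleman estimate in the Kerr program of \cite{Ion-K}.
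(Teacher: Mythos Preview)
Your strategy unravels at step 3. The hyperboloids $\Sigma_c=\{|x|^2-t^2=c^2\}$ are \emph{timelike}, not spacelike: the gradient of $|x|^2-t^2$ has Minkowski norm $4(|x|^2-t^2)=4c^2>0$, so the normal is spacelike. Consequently the flux $T(\partial_t,n)$ on $\Sigma_c$ is sign-indefinite, and no Gronwall argument in $c$ can be run. More conceptually, $\Omega_c$ is \emph{not} the domain of dependence of the portion of $\HH$ on its boundary---for instance, the outgoing null ray from any $p\in\Omega_c$ (along which $\underline u$ is constant and $u$ increases) leaves $\Omega_c$ through $\Sigma_c$, not through $\HH$. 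Propagating vanishing outward from $\HH$ is exactly the ill-posed direction the theorem is about, and energy methods cannot do this for $\square\phi=A\phi+B^l\partial_l\phi$ with arbitrary continuous $A,B^l$; this is the content of the non-uniqueness examples cited in the introduction. Your steps 1--2 do not help here: once you leave the small neighborhood $U$ of $S$, you are back to unique continuation across a timelike hypersurface, which is precisely what the Carleman machinery is needed for.

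The paper avoids this entirely by proving a \emph{single} Carleman inequality valid on the whole relatively compact set $\EE_R=\{(u+1/2)(v+1/2)<R\}$ for every $R\ge 1$, with the weight
\[
f=\ln[(u+1/2)(v+1/2)]=\ln\big[(|x|-1/2)^2-t^2\big]
\]
(here $u=|x|-1-t$, $v=|x|-1+t$, i.e.\ your $u-1,\underline u-1$). The level sets of this $f$ are pseudo-convex throughout $\EE_R$, not just near $S$, and the fixed shift by $1/2$ (rather than a small $\varepsilon$) is what makes the estimate global. One then applies the Carleman estimate to $\phi$ times a cutoff of the form $\eta(uv/\delta)(1-\eta(\text{large}))$; the bifurcate structure enters only in controlling the inner cutoff error via $|\D^\al(uv)\D_\al(uv)|\le C\,uv$, which is where the factorization $\phi=uv\cdot\phi'$ (from $\phi|_{\HH}=0$) is used. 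Letting $\delta\to0$ and then $\lambda\to\infty$ gives $\phi\equiv0$ on $\EE_R$ directly. No transport step and no separate energy propagation are needed.
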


Theorem \ref{Main1} extends easily to diagonal systems of scalar equations. We remark that in Theorem \ref{Main1} we do not assume any global bounds on the coefficients $A$ and $B^l$.  Also, we make no assumption on the vanishing of the derivatives of $\phi$ on $\mathcal{H}$, which is somewhat surprising given that $\square$ is a second order operator. This is possible because of the special bifurcate characteristic structure of the surface $\HH$.  

The proof of Theorem \ref{Main1}, which is given in section \ref{MinProof}, follows from a standard Carleman inequality with a suitably defined pseudo-convex weight. However, the simple statement of Theorem \ref{Main1} appears to be new. We include it here mostly as a model  result  to illustrate, in a very simple case, the connection between bifurcate characteristic horizons and unique continuation properties of solutions of wave equations. 

\subsection{The main theorem in the Kerr spaces}\label{kerrmainthm}

Let $(\mathbf{K}^4,\mathbf{g})$ denote the maximally extended Kerr spacetime of mass $m$ and angular momentum $ma$ (see the appendix for some details and explicit formulas). We assume 
\begin{equation*}
m>0\text{ and }a\in[0,m).
\end{equation*}
Let $\mathbf{E}^4$ denote a domain of outer communication of $\mathbf{K}^4$, and $\HH=\delta(\mathbf{E}^4)$ the corresponding event horizon. Let $\mathbf{M}^4$ denote an open neighborhood of $\mathbf{E}^4\cup\HH$ in $\mathbf{K}^4$, and let $\xi$ denote a Killing vector field on $\mathbf{E}^4$ which is timelike at the spacelike infinity in $\mathbf{E}^4$. Let $\mathbb{T}(\mathbf{M}^4)$ denote the space of smooth vector-fields on ${\mathbf{M}}^4$, and let $\mathbb{T}_s^r({\mathbf{M}}^4)$, $r,s\in\mathbb{Z}_+$, denote the space of complex-valued tensor-fields of type $(r,s)$ on ${\mathbf{M}}^4$. Our main theorem concerns a uniqueness property of certain solutions of covariant wave equations on $\mathbf{E}^4$.

\begin{theorem}\label{Main1kerr} 
Assume $k\in\mathbb{Z}_+$, $\Ss\in\mathbb{T}_k^0({\mathbf{M}}^4)$, $\Aa\in\mathbb{T}_k^k({\mathbf{M}}^4)$, $\Bb\in\mathbb{T}_k^{k+1}({\mathbf{M}}^4)$, $\Cc\in\mathbb{T}_k^k({\mathbf{M}}^4)$, and
\begin{equation}\label{va4}
\begin{cases}
&\square_\mathbf{g}\Ss_{\al_1\ldots \al_k}=\Ss_{\be_1\ldots\be_k}{\Aa^{\be_1\ldots\be_k}}_{\al_1\ldots\al_k}+\D_{\be_{k+1}}\Ss_{\be_1\ldots\be_k}{\Bb^{\be_1\ldots\be_{k+1}}}_{\al_1\ldots\al_k};\\
&\mathcal{L}_{\xi}\Ss_{\al_1\ldots\al_k}=\Ss_{\be_1\ldots\be_k}{\Cc^{\be_1\ldots\be_k}}_{\al_1\ldots\al_k},
\end{cases}
\end{equation}
in $\mathbf{E}^4$. Assume in addition that
$
\Ss\equiv 0\text{ on }\HH.
$
Then,
$
\Ss\equiv 0\text{ on }\mathbf{E}^4\cup\HH.
$
\end{theorem}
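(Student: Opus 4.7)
The plan is to implement the two-step Carleman strategy sketched in items (4)--(5) of the introduction. Starting from the bifurcate sphere $S_0=\HH_+\cap\HH_-$ where the two null components of the event horizon meet, I would first prove that $\Ss$ vanishes in a small neighborhood $\mathcal{U}$ of $S_0$ inside $\mathbf{E}^4$ using only the first equation of \eqref{va4}. I would then invoke the Lie-derivative condition---the second equation of \eqref{va4}---to propagate the vanishing from $\mathcal{U}$ through the entire domain of outer communication. In both steps the mechanism is a Carleman inequality of the schematic form
\[
\lambda\,\|e^{-\lambda f}u\|_{L^2}^2+\|e^{-\lambda f}\D u\|_{L^2}^2\leq C\lambda^{-1}\|e^{-\lambda f}\square_\g u\|_{L^2}^2,
\]
valid for compactly supported tensor fields $u$, with $f$ a suitably pseudo-convex weight relative to $\square_\g$; the lower-order contributions from $\Aa$ and $\Bb$ are absorbed once $\lambda$ is taken large.

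For the local step near $S_0$ I would introduce two optical functions $u_+,u_-$ normalized so that $\HH_\pm=\{u_\pm=0\}$, and build a weight of the schematic form $f=\log(u_+u_-+\epsilon)+h$ with $h$ a small geometric correction adapted to the Kerr metric. The product $u_+u_-$ plays the role that $|x|^2-t^2-1$ does in the Minkowski model of Theorem \ref{Main1}, and the bifurcate structure of $\HH$ makes $f$ pseudo-convex in a neighborhood of $S_0$. Since $\Ss$ vanishes on both $\HH_+$ and $\HH_-$, the boundary contributions on $\HH$ in the Carleman estimate applied to $\Ss$ are harmless---the argument exploits the same geometric miracle as in Theorem \ref{Main1}, namely that no vanishing of the transversal derivatives of $\Ss$ need be assumed, precisely because of the bifurcate characteristic structure---and sending $\lambda\to\infty$ yields $\Ss\equiv 0$ on $\mathcal{U}$.

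The main obstacle lies in the global step. A straightforward sweep of $\mathbf{E}^4$ by a pseudo-convex foliation emanating from $\mathcal{U}$ fails because of the ergoregion: inside it the stationary Killing field $\xi$ becomes spacelike, and no globally pseudo-convex weight for $\square_\g$ alone can be constructed. This is exactly where the second equation of \eqref{va4} must be exploited: because $\mathcal{L}_\xi\Ss$ is controlled pointwise by $\Ss$ itself, one can effectively reduce matters to each $\xi$-frequency, and at fixed $\xi$-frequency the operator $\square_\g$ acquires an extra zero-order contribution that restores pseudo-convexity on a larger region extending through the ergoregion. One then designs a family of weights $f_n$ that are pseudo-convex with respect to this $\xi$-reduced operator and together cover $\mathbf{E}^4$, iterating the resulting Carleman estimates starting from $\mathcal{U}$ to conclude $\Ss\equiv 0$ on $\mathbf{E}^4$; continuity then extends the vanishing to $\HH$. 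The hard part is the construction of this second weight: achieving pseudo-convexity, compatibility with $\xi$ so that the $\mathcal{L}_\xi\Ss$-terms can be absorbed, and adequate behavior toward spacelike infinity appears to require a delicate geometric analysis of the Kerr exterior and is the technical heart of the theorem.
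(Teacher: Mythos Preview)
Your two-step outline matches the paper's strategy, and your description of the local step near $S_0$ is essentially what the paper does: defining functions $u,v$ for $\HH^\pm$ and a logarithmic weight built from $(u+\eps)(v+\eps)$. Two points of divergence are worth flagging.

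First, the second weight is much simpler than you anticipate. The paper takes $h_\eps=r-r(x_0)+\eps$, where $r$ is the Boyer--Lindquist radial coordinate, and the entire ``delicate geometric analysis'' reduces to checking the quantitative $\xi$-conditional pseudo-convexity inequality
\[
\eps_1^2|X|^2\leq X^\al X^\be(\mu\g_{\al\be}-\D_\al\D_\be h_\eps)+\eps^{-2}\bigl(|X^\al\xi_\al|^2+|X^\al\D_\al h_\eps|^2\bigr)
\]
at each point, for a suitable constant $\mu$. This is a direct computation with the Kerr metric components. The Carleman inequality then carries an additional term $\eps^{-6}\|e^{-\lambda f_\eps}\xi(\phi)\|_{L^2}$ on the right, which is absorbed using the second equation of \eqref{va4} pointwise---no frequency decomposition in the $\xi$-direction is performed.

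Second, and more substantively, your iteration scheme is missing a step that the paper uses twice. The neighborhood $\mathcal{U}$ of $S_0$ produced by the first Carleman argument is bounded in the $t$-direction, whereas the level sets $\{r=R_0\}$ across which the second Carleman propagates are non-compact (they extend to $t=\pm\infty$). The paper closes this gap by using the second equation of \eqref{va4} \emph{as an ODE}: in Boyer--Lindquist coordinates $\xi=\partial_t$, so $\partial_t\widetilde\phi_{(j_1\ldots j_k)}=\widetilde\phi_{(l_1\ldots l_k)}{\Cc^{l_1\ldots l_k}}_{j_1\ldots j_k}$, and vanishing on a slice $\{t=0,\,r<R\}$ propagates to the full slab $\{r<R\}$. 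This ODE step is invoked once to pass from $\mathcal{U}$ to a slab $\mathbf{E}^4_R=\{r<R\}$, and again after each local application of the second Carleman estimate to fill in the enlarged slab. Without it your covering-by-balls iteration does not obviously close, since you would need to patch infinitely many balls along the $t$-axis.
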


\section{Unique continuation and conditional Carleman inequalities}\label{generalCarleman}
\subsection{General considerations}

Our proof of Theorem \ref{Main1kerr} is based on a global unique continuation strategy.  We say
that a linear differential operator $L$, in a domain $\Omega\subset\RRR^d$,   satisfies the unique continuation property with respect  to a smooth, oriented, hypersurface $\Si\subset\Omega$,  if  any smooth solution  of $L\phi=0$ which vanishes on one side of $\Si$  must in fact vanish in a small neighborhood of $\Si$. Such a property depends,
of course, on the interplay between the properties of the operator $L$ and the hypersurface $\Sigma$. A classical result of H\"{o}rmander, see for example Chapter 28 in \cite{Ho}, provides sufficient conditions for a scalar linear equation which guarantee that the unique continuation property holds. In the particular case of
the scalar wave equation, 
$
\square_\g\phi=0,
$ 
and a smooth surface $\Sigma$ defined by the equation $h=0$, $\nabla h\neq 0$, H\"{o}rmander's pseudo-convexity condition takes  the form,
\begin{equation}\label{HoCond}
\D^2h(X,X)<0\qquad\text{ if }\qquad \g(X,X)=\g(X,\D h)=0
\end{equation}
at all points on the surface $\Sigma$, where we assume that $\phi$ is known to vanish on the side of $\Sigma$ corresponding to $h<0$. 

In our situation, we plan  to apply the general philosophy of unique continuation to  the covariant wave equation (see the first equation in \eqref{va4}),
\begin{equation}\label{cartoon1}
\square_\g\Ss=\AA\ast\SS+\BB\ast\D\Ss.
\end{equation}
We know that $\Ss$ vanishes on the horizon $\mathcal{H}$ and we would like to prove, by unique continuation, that $\Ss$ vanishes in the entire domain of outer communication.   In implementing such a strategy one encounters the following
difficulties:
\begin{enumerate}
\item The  horizon $\mathcal{H}=\mathcal{H}^+\cup\mathcal{H}^-$ is characteristic and not
smooth in a neighborhood of the bifurcate sphere.
\item Even though one can show that an appropriate variant of H\"{o}rmander's pseudo-convexity condition holds true  along  the horizon, in a neighborhood of the bifurcate
sphere,  such a condition may fail to be true  slightly away from the horizon,  within the ergosphere  region of the stationary space-time  where $\xi$ is space-like. 
\end{enumerate}
Problem (1) can be dealt with by exploiting the fact that the horizon is a bifurcate characteristic hypersurface, which, in particular, is sufficient to allow us to prove that higher order derivatives of $\Ss$ vanish on the horizon. Problem (2) is more serious, in the case when $a$ is not small compared to $m$, because of the existence of null geodesics  trapped within the ergoregion $m+\sqrt{m^2-a^2}\le r\le  m+\sqrt{m^2-a^2\cos^2\theta}$.  Indeed    
   surfaces  of the form $r\Delta=m(r^2-a^2)^{1/2}$,   which intersect the ergoregion for $a$ sufficiently close to $m$,  are known to contain such    null geodesics, see \cite{Ch}.  One can  show that the  presence of  trapped  null geodesics invalidates H\"ormander's pseudo-convexity condition. Thus, even in the case of the scalar wave equation  $\square_\g\phi=0$ in  such a Kerr metric, one cannot guarantee,  by a classical unique continuation argument (in the absence of additional conditions)  that  $\phi$ vanishes beyond a small neighborhood of the horizon. 
   
In order to overcome this main difficulty  we need to exploit the second identity in \eqref{va4}, namely
\begin{equation}\label{cartoon2}
\mathcal{L}_\T\Ss=\Cc\ast\Ss.
\end{equation}
Observe that \eqref{cartoon2} can, in principle, transform  \eqref{cartoon1}
into a much simpler elliptic problem, in  any domain which lies strictly outside the
ergoregion (where $\xi$ is strictly  time-like). Unfortunately this   possible strategy  is not available to us when $a$ is not small compared to $m$, since,
as we have remarked above,   we cannot hope  to  extend the vanishing  of $\Ss$,
by a simple  analogue of H\"ormander's pseudo-convexity condition,  
beyond the   first trapped null geodesics. 

Our solution is to extend H\"{o}rmander's classical  pseudo-convexity condition \eqref{HoCond} to one which takes into account both equations
\eqref{cartoon1} and \eqref{cartoon2} simultaneously.  These considerations lead  to the following
 qualitative, $\xi$-conditional, pseudo-convexity condition, 
\begin{equation}\label{HoCond2}
\begin{split}
&\xi(h)=0;\\
&\D^2h(X,X)<0\qquad\text{ if }\qquad \g(X,X)=\g(X,\D h)=\g(\xi,X)=0.
\end{split}
\end{equation}
We will show that this condition can be verified  in  all Kerr spaces  $a\in[0,m)$, for the simple function $h=r$, where $r$ is one  of the Boyer--Lindquist coordinates. Thus \eqref{HoCond2}   is a good substitute for  the more general condition \eqref{HoCond}. The fact that the two geometric identities \eqref{cartoon1} and \eqref{cartoon2} cooperate exactly in the right way, via   \eqref{HoCond2},   thus  allowing us  to compensate for  both the failure of 
 condition \eqref{HoCond}   as well as  the failure of the vector field $\xi$ to be time-like in the ergoregion, seems to us to be a very remarkable property of the Kerr spaces. In the next subsection we give a quantitative version    of the condition and state a Carleman estimate of sufficient generality to cover    all our needs.

\subsection{A conditional Carleman inequality of sufficient generality}
Unique continuation properties are often proved using Carleman inequalities. In this subsection we state a sufficiently general Carleman inequality, Proposition \ref{Cargen}, under a quantitative conditional pseudo-convexity assumption. This general  Carleman inequality is used to show first that $\Ss$ vanishes in a small neighborhood of the bifurcate sphere $S_0$ in $\overline{\mathbf{E}^4}$, using only the first identity in \eqref{va4}, and then to prove that $\Ss$ vanishes in the entire exterior domain using both identities in \eqref{va4}. The two applications are genuinely different, since, in particular, the horizon is a bifurcate surface which is not smooth and the weights needed in this case have to be ``singular'' in an appropriate sense. In order to be able to cover both applications and prove unique continuation in a quantitative sense, we work with a more  technical notion of conditional pseudo-convexity than  \eqref{HoCond2}, see Definition \ref{psconvex} below.      

Let $B_r=\{x\in\mathbb{R}^4:|x|<r\}$ denote the standard open ball in $\mathbb{R}^4$. Assume that $(M,\g)$ is a smooth Lorentzian manifold of dimension $4$, $x_0\in M$, and $\Phi^{x_0}:B_1\to B_1(x_0)$ is a coordinate chart. For  simplicity  of notation, let $B_r(x_0)=\Phi^{x_0}(B_r)$, $r\in(0,1]$. For any smooth function $\phi:B\to\mathbb{C}$, where $B\subseteq B_1(x_0)$ is an open set, and $j=0,1,\ldots$ let
\begin{equation}\label{gradco}
|D^j\phi(x)|=\sum_{\al_1,\ldots,\al_j=1}^4|\partial_{\al_1}\ldots\partial_{\al_j}\phi(x)|.
\end{equation}
Let $\g_{\al\be}=\g(\partial_\al,\partial_\be)$ and assume that $V=V^\alpha\partial_\alpha$ is a vector-field on $B_1(x_0)$. We assume that
\begin{equation}\label{po1}
\sup_{x\in B_1(x_0)}\sum_{j=0}^6\sum_{\al,\beta=1}^4|D^j\g_{\al\be}|+|D^j\g^{\al\be}|+|D^jV^\beta|\leq A_0.
\end{equation}
In our applications $V=0$ or $V=\xi$.

\begin{definition}\label{psconvex}
A family of weights $h_\eps:B_{\eps^{10}}(x_0)\to\mathbb{R}_+$, $\eps\in(0,\eps_1)$, $\eps_1\leq A_0^{-1}$, will be called $V$-conditional pseudo-convex if for any $\eps\in(0,\eps_1)$ \begin{equation}\label{po5}
\begin{split}
h_\eps(x_0)=\eps,\quad\sup_{x\in B_{\eps^{10}}(x_0)}\sum_{j=1}^4\eps^j|D^jh_\eps(x)|\leq\eps/\eps_1,\quad |V(h_\eps)(x_0)|\leq\eps^{10},
\end{split}
\end{equation}
\begin{equation}\label{po3.2}
\D^\alpha h_\eps(x_0)\D^\be h_\eps(x_0)(\D_\al h_\eps\D_\be h_\eps-\eps\D_\al\D_\be h_\eps)(x_0)\geq\eps_1^2,
\end{equation}
and there is $\mu\in[-\eps_1^{-1},\eps_1^{-1}]$ such that for all vectors $X=X^\alpha\partial_\alpha\in\mathbb{T}_{x_0}(M)$
\begin{equation}\label{po3}
\begin{split}
&\eps_1^2[(X^1)^2+(X^2)^2+(X^3)^2+(X^4)^2]\\
&\leq X^\al X^\be(\mu\g_{\al\be}-\D_\al\D_\be h_\eps)(x_0)+\eps^{-2}(|X^\al V_\al(x_0)|^2+|X^\al\D_\al h_\eps(x_0)|^2).
\end{split}
\end{equation}
A function $e_\eps:B_{\eps^{10}}(x_0)\to\mathbb{R}$ will be called a negligible perturbation if
\begin{equation}\label{smallweight}
\sup_{x\in B_{\eps^{10}}(x_0)}|D^je_\eps(x)|\leq\eps^{10}\qquad\text{ for  }j=0,\ldots,4.
\end{equation}
\end{definition}

\begin{remark} One can see that the technical conditions \eqref{po5}, \eqref{po3.2}, and \eqref{po3} are related to the qualitative condition \eqref{HoCond2}, at least when $h_\eps=h+\eps$ for some smooth function $h$. The assumption $|V(h_\eps)(x_0)|\leq\eps^{10}$ is a quantitative version of $V(h)=0$. The assumption \eqref{po3} is a quantitative version of the inequality in the second line of \eqref{HoCond2}, in view of the large factor $\eps^{-2}$ on the terms $|X^\al V_\al(x_0)|^2$ and $|X^\al\D_\al h_\eps(x_0)|^2$, and the freedom to choose $\mu$ in a large range. The assumption \eqref{po3.2} is a quantitative version of the condition $\nabla h\neq 0$ (assuming that \eqref{po3} already holds). 

It is important that the Carleman estimates we prove are stable under small perturbations of the weight, in order to be able to use them to prove unique continuation. We quantify this stability in \eqref{smallweight}.  
\end{remark}

We observe that if $\{h_\eps\}_{\eps\in(0,\eps_1)}$ is a $V$-conditional pseudo-convex family, and $e_\eps$ is a negligible perturbation for any $\eps\in(0,\eps_1]$, then
\begin{equation*}
h_\eps+e_\eps\in[\eps/2,2\eps]\text{ in }B_{\eps^{10}}(x_0).
\end{equation*}
The pseudo-convexity conditions of Definition \ref{psconvex} are probably not as general as possible, but are suitable for our applications both in Proposition \ref{Car}, with ``singular'' weights $h_\eps$ and $V=0$, and Proposition \ref{Carl2}, with ``smooth'' weights $h_\ep$ and $V=\xi$.

\begin{proposition}\label{Cargen}
Assume $\eps_1\leq A_0^{-1}$, $\{h_\eps\}_{\eps\in(0,\eps_1)}$ is a $V$-conditional pseudo-convex family, and $e_\eps$ is a negligible perturbation for any $\eps\in(0,\eps_1]$, see Definition \ref{psconvex}. Then there is $\eps\in (0,\eps_1)$ sufficiently small and $C_\eps$ sufficiently large such that for any $\lambda\geq C_\eps$ and any $\phi\in C^\infty_0(B_{\eps^{10}}(x_0))$
\begin{equation}\label{Car1gen}
\lambda \|e^{-\lambda f_\eps}\phi\|_{L^2}+\|e^{-\lambda f_\eps}|D^1\phi|\,\|_{L^2}\leq C_\eps\lambda^{-1/2}\|e^{-\lambda f_\eps}\,\square_{\g}\phi\|_{L^2}+\eps^{-6}\|e^{-\lambda f_\eps}V(\phi)\|_{L^2},
\end{equation}
where $f_\ep=\ln (h_\eps+e_\eps)$.
\end{proposition}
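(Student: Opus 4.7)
The plan is to run a positive commutator argument in the style of H\"ormander, carefully adapted so that the one direction $V$ in which \eqref{po3} fails to be strictly positive is pushed onto the right-hand side. I would first set $\psi = e^{-\lambda f_\eps}\phi$ (still compactly supported in $B_{\eps^{10}}(x_0)$) and compute the conjugated operator,
\begin{equation*}
P_\lambda \psi := e^{-\lambda f_\eps}\square_\g(e^{\lambda f_\eps}\psi) = \square_\g \psi + 2\lambda \D^\al f_\eps \D_\al \psi + \lambda(\square_\g f_\eps)\psi + \lambda^2|\D f_\eps|_\g^2 \psi,
\end{equation*}
then split $P_\lambda = S + A$, with $S\psi = \square_\g\psi + \lambda^2|\D f_\eps|_\g^2\psi$ self-adjoint and $A\psi = 2\lambda\D^\al f_\eps \D_\al\psi + \lambda(\square_\g f_\eps)\psi$ skew-adjoint on $L^2(dV_\g)$. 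The identity
\begin{equation*}
\|P_\lambda \psi\|_{L^2}^2 = \|S\psi\|_{L^2}^2 + \|A\psi\|_{L^2}^2 + \langle[S, A]\psi, \psi\rangle_{L^2}
\end{equation*}
then reduces matters to a favorable lower bound on the commutator.

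I would then expand $[S, A]$ to leading order. The bracket $[\square_\g, 2\lambda \D^\al f_\eps \D_\al]$, after integration by parts against $\psi$, yields a quadratic form in $\D\psi$ whose leading coefficient is $-4\lambda \D^2 f_\eps$; the bracket $[\lambda^2|\D f_\eps|_\g^2, 2\lambda \D^\al f_\eps \D_\al]$ yields a zero-order multiplier of order $\lambda^3$ whose positivity is supplied by \eqref{po3.2}. Unfolding
\begin{equation*}
\D_\al \D_\be f_\eps = \frac{\D_\al \D_\be h_\eps}{h_\eps + e_\eps} - \frac{\D_\al h_\eps\, \D_\be h_\eps}{(h_\eps+e_\eps)^2} + O(\eps^{8}),
\end{equation*}
the leading quadratic form in $X = \D\psi$ becomes, up to an overall positive factor and an effective symmetric shift $\mu\g_{\al\be}$ absorbed from $\lambda^2|\D f_\eps|_\g^2$ (here $\mu$ is precisely the parameter from Definition \ref{psconvex}), the expression $X^\al X^\be(\mu\g_{\al\be} - \D_\al \D_\be h_\eps)$ appearing on the right-hand side of \eqref{po3}. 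That hypothesis therefore bounds it below by $\eps_1^2|X|^2$ modulo the sole defect $\eps^{-2}|V^\al X_\al|^2 = \eps^{-2}|V(\psi)|^2$, and the derivative bounds \eqref{po1} and \eqref{po5} propagate the pointwise inequality uniformly over the small ball $B_{\eps^{10}}(x_0)$, with the remaining errors absorbed into the $\lambda^3$-strength multiplier from \eqref{po3.2}.

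The final step is to move the $V$-defect to the right-hand side and to absorb the perturbation $e_\eps$. Writing $e^{-\lambda f_\eps}V(\phi) = V(\psi) + \lambda V(f_\eps)\psi$ and using $|V(f_\eps)| \lesssim \eps^8$ throughout $B_{\eps^{10}}(x_0)$, which follows from $|V(h_\eps)(x_0)| \leq \eps^{10}$ in \eqref{po5} combined with the Lipschitz bounds in \eqref{po1}, $\|V(\psi)\|_{L^2}$ is controlled by $\|e^{-\lambda f_\eps}V(\phi)\|_{L^2}$ modulo an error $\lambda \eps^8 \|\psi\|_{L^2}$, which is readily absorbed into the $\lambda$-strength term on the left. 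The weight $\eps^{-6}$ in \eqref{Car1gen} packages the $\eps^{-2}$ defect from \eqref{po3} together with the negative powers of $h_\eps + e_\eps \approx \eps$ inherited through $\D f_\eps$ and $\D^2 f_\eps$ in the logarithmic weight. Selecting $\eps$ first, small enough that the $\lambda^3$ multiplier of \eqref{po3.2} dominates all $O(\lambda)\cdot\eps^{-c}$ errors and the $O(\eps^{10})$ perturbations from $e_\eps$, then $\lambda \geq C_\eps$ large, and finally undoing the conjugation $\psi = e^{-\lambda f_\eps}\phi$, delivers \eqref{Car1gen}. The main obstacle is the commutator bookkeeping: verifying that after all the log-weight unfoldings and the $\mu$-absorption, the resulting quadratic form matches \eqref{po3} precisely enough that only the $V$-direction escapes strict positivity, and that the final power $\eps^{-6}$ genuinely captures the combined defect after accounting for the cross-term $\lambda V(f_\eps)\psi$.
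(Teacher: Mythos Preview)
The paper does not prove this proposition in the text; it refers to \cite[Section 3]{Ion-K} and offers Proposition \ref{Carmin1} as a model in the simpler Minkowski setting. That model proof is a \emph{multiplier} argument rather than an $S+A$ split: after conjugating to $L\psi=\square\psi+2\lambda W\psi+\lambda^2 G\psi$ with $W=\D^\al f\D_\al$, one pairs $L\psi$ with $2\lambda(W-w)\psi$ for a \emph{free} scalar $w$, and after integration by parts the positivity requirement becomes two pointwise inequalities (see \eqref{Car22}--\eqref{Car23}). The parameter $w'=w+\tfrac12\square f$ appearing there is precisely the $\mu$ of Definition \ref{psconvex}; choosing $w$ is how $\mu$ enters the argument.

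Your $S+A$ commutator scheme is closely related, and your treatment of the $V$-defect, the $|X^\al\D_\al h_\eps|^2$ term (which indeed comes from $\|A\psi\|^2$), and the negligible perturbation $e_\eps$ is fine. But the step where you say $\mu\g_{\al\be}$ is ``absorbed from $\lambda^2|\D f_\eps|_\g^2$'' does not work as written. That term is a zero-order multiplier on $\psi$; its only contribution to the quadratic form in $\D\psi$ arises through the cross term $2\lambda^2\mathrm{Re}\langle\square_\g\psi,\,|\D f_\eps|_\g^2\psi\rangle$ inside $\|S\psi\|^2$, which after one integration by parts yields the coefficient $-2\lambda^2|\D f_\eps|_\g^2$ in front of $\g_{\al\be}\D^\al\psi\D^\be\psi$. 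This (i) scales with $\lambda$ rather than lying in the fixed range $[-\eps_1^{-1},\eps_1^{-1}]$ required by \eqref{po3}, and (ii) has the wrong sign in both applications: in Propositions \ref{Car} and \ref{Carl2} the gradient $\D h_\eps$ is spacelike, so $|\D f_\eps|_\g^2>0$, whereas the $\mu$ actually selected there ($\mu=\eps_1^{-1/2}$ and $\mu=3\Delta r/(2\rho^4)$ respectively) is strictly positive. Without a genuinely free $\mu$-shift you cannot invoke \eqref{po3}, and on a Lorentzian background the unshifted form $-\D_\al\D_\be h_\eps$ is never positive definite.

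The remedy is to introduce $\mu$ by hand as a multiplier, exactly as the paper does: replace the identity $\|P_\lambda\psi\|^2=\|S\psi\|^2+\|A\psi\|^2+\langle[S,A]\psi,\psi\rangle$ by the cruder bound $\|P_\lambda\psi\|\cdot\|2\lambda(W-w)\psi\|\geq\mathrm{Re}\langle P_\lambda\psi,\,2\lambda(W-w)\psi\rangle$ with $w$ to be chosen, and expand the right-hand side. After that modification the remainder of your outline goes through.
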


As mentioned earlier, many Carleman estimates such as \eqref{Car1gen} are known, for the particular case $V=0$. Optimal proofs are usually based on some version of the Fefferman-Phong inequality, as in \cite[Chapter 28]{Ho}. A self-contained, elementary proof of Proposition \ref{Cargen}, using only simple integration by parts arguments is given in \cite[Section 3]{Ion-K} (see also Proposition \ref{Carmin1} in section \ref{MinProof} for  a similar proof in a simpler case). We also note that it is useful to be able to track quantitatively the size of the support of the functions for which Carleman estimates can be applied; in our notation, the value of $\eps$ for which  \eqref{Car1gen} holds depends only on the parameter $\eps_1$.

\section{Proof of Theorem \ref{Main1kerr}}\label{KerProof}

\subsection{The first Carleman inequality in Kerr spaces}\label{CarEst1}

The horizon $\HH$ decomposes as
\begin{equation*}
\HH=\HH^+\cup\HH^-,
\end{equation*}
where $\HH^+$ is the boundary of the black hole region and $\HH^-$ is the boundary of the white hole region. Let $\, S_0=\HH^+\cap\HH^-$ denote the bifurcate sphere. In this section we prove a Carleman estimate for functions supported in a small neighborhood of the bifurcate sphere $\, S_0$.

We first construct two suitable defining functions for the surfaces $\HH^+$ and $\HH^-$.

\begin{lemma}\label{ao}
There is an open set $\mathbf{O}\subseteq{\mathbf{M}}^4$, $\, S_0\subseteq \mathbf{O}$, and smooth functions $u,v:\mathbf{O}\to\mathbb{R}$ with the following properties:

(a) We have
\begin{equation*}
\begin{cases}
&\mathbf{E}^4\cap \mathbf{O}=\{x\in\mathbf{O}:u(x)>0\text{ and }v(x)>0\};\\
&\HH^+\cap\mathbf{O}=\{x\in\mathbf{O}:u(x)=0\};\\
&\HH^-\cap\mathbf{O}=\{x\in\mathbf{O}:v(x)=0\}.
\end{cases}
\end{equation*}
In addition, the set $\{x\in\mathbf{O}:u(x),v(x)\in[0,1/2]\}$ is compact.

(b) With $L_3=g^{\al\be}\partial_\al(u)\partial_\be, L_4=g^{\al\be}\partial_\al(v)\partial_\be\in\mathbb{T}(\mathbf{O})$,
\begin{equation}\label{ao1}
\begin{cases}
&\g(L_3,L_3)=0\text{ on }\HH^+\cap\mathbf{O};\\
&\g(L_4,L_4)=0\text{ on }\HH^-\cap\mathbf{O};\\
&\g(L_3,L_4)>0\text{ on }\, S_0.
\end{cases}
\end{equation}

(c) For any smooth function $\phi:\mathbf{O}\to\mathbb{R}$ with the property that $\phi\equiv 0$ on $\HH^+\cap\mathbf{O}$, there is a smooth function $\phi':\mathbf{O}\to\mathbb{R}$ such that
\begin{equation*}
\phi=\phi'\cdot u\text{ on }\mathbf{O}\cap \mathbf{E}^4.
\end{equation*}
Also, for any smooth function $\phi:\mathbf{O}\to\mathbb{R}$ with the property that
$\phi\equiv 0$ on $\HH^-\cap{\mathbf{O}}$, there is a smooth function $\phi':\mathbf{O}\to\mathbb{R}$ such that
\begin{equation*}
\phi=\phi'\cdot v\text{ on }\mathbf{O}\cap \mathbf{E}^4.
\end{equation*}
\end{lemma}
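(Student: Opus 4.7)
The plan is to construct $u$ and $v$ as Kruskal-type null coordinates adapted to the two null generators of $\HH^+$ and $\HH^-$ meeting at the bifurcate sphere $S_0$. The Killing field generating $\HH^+\cup\HH^-$, namely $\mathbf{T}=\xi+\Omega_H\Psi$ where $\Psi$ is the axial Killing field of Kerr and $\Omega_H$ the angular velocity of the horizon, has non-zero surface gravity $\kappa>0$, so that $\HH^+\cup\HH^-$ is a non-degenerate bifurcate Killing horizon. The classical Boyer--Hawking--Ellis theory (as recalled in the appendix via explicit Boyer--Lindquist formulas) then produces a smooth extension of $(\mathbf{K}^4,\mathbf{g})$ across $S_0$ and Kruskal-type null coordinates $(U,V)$ on an open neighborhood $\mathbf{O}$ of $S_0$, so that $\{U=0\}=\HH^+\cap\mathbf{O}$, $\{V=0\}=\HH^-\cap\mathbf{O}$, and $\mathbf{E}^4\cap\mathbf{O}=\{U<0,\,V>0\}$; setting $u=-U$, $v=V$ yields part (a). Compactness of $\{x\in\mathbf{O}:u(x),v(x)\in[0,1/2]\}$ follows from compactness of $S_0$ by shrinking $\mathbf{O}$ appropriately.

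For (b), each of $\HH^+\cap\mathbf{O}=\{u=0\}$ and $\HH^-\cap\mathbf{O}=\{v=0\}$ is by construction a null hypersurface. In particular the conormal $du$ is non-vanishing and null along $\HH^+\cap\mathbf{O}$, so $\g(L_3,L_3)=\g^{\al\be}\partial_\al u\,\partial_\be u=0$ on $\HH^+\cap\mathbf{O}$; likewise $\g(L_4,L_4)=0$ on $\HH^-\cap\mathbf{O}$. At any point $p\in S_0$ the vectors $L_3(p)$ and $L_4(p)$ are null and both normal to the spacelike two-sphere $S_0$, so they span the Lorentzian normal 2-plane to $S_0$ at $p$; since they are also linearly independent (being conormal to distinct null hypersurfaces meeting transversally at $S_0$), their inner product is non-zero. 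The sign conventions built into $u$ and $v$ can be arranged so that $\g(L_3,L_4)>0$ on $S_0$.

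For (c), since $du$ is non-vanishing on $\HH^+\cap\mathbf{O}$, we can cover $\HH^+\cap\mathbf{O}$ by coordinate patches in which $u$ is the first coordinate, say $(u,y^1,y^2,y^3)$. A smooth function $\phi$ on such a patch with $\phi|_{u=0}=0$ satisfies
\[
\phi(u,y)=u\int_0^1(\partial_u\phi)(su,y)\,ds,
\]
exhibiting a smooth factor $\phi'$ locally. These local quotients, together with the obvious $\phi'=\phi/u$ on $\{u\neq 0\}\cap\mathbf{O}$, patch via a partition of unity into a global smooth $\phi'$ on $\mathbf{O}$. The same argument with $v$ in place of $u$ gives the analogous statement for $\HH^-$.

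The main technical step is the construction of the Kruskal-type smooth extension across $S_0$ itself, since the Boyer--Lindquist coordinates are singular at $r=r_+$ and the Kerr metric depends non-trivially on $\vphi$. This is handled by first passing to a co-rotating chart that adapts the time coordinate to $\mathbf{T}$, introducing retarded and advanced null coordinates along the two families of horizon-generating null geodesics, and finally exponentiating them with the surface-gravity exponent to obtain smooth $(U,V)$; the argument is classical for non-degenerate bifurcate Killing horizons and the explicit formulas recalled in the appendix reduce it to a direct verification.
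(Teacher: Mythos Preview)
Your proposal is correct and follows essentially the same route as the paper: construct Kruskal-type null coordinates near the bifurcate sphere by passing to a co-rotating angular coordinate and exponentiating the retarded/advanced null coordinates with the surface-gravity exponent, then read off (a)--(c). The paper carries this out by writing down the explicit change of variables $r_\ast=c_0(\ln u+\ln v)$, $t=c_0(\ln u-\ln v)$, $\phi=\phi_\ast+\frac{a}{2mr_+}t$ and checking directly that the resulting metric coefficients are smooth functions of $uv$ across $uv=0$; your version instead invokes the general bifurcate-Killing-horizon construction and defers the verification to the appendix formulas. The only substantive addition in your write-up is the Hadamard integral argument for (c), which the paper omits entirely, simply asserting that (c) holds for the coordinate functions $u,v$; your argument here is the standard one and fills a small gap in the paper's exposition.
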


\begin{proof}[Proof of Lemma \ref{ao}] A  more  precise construction of global optical functions $u,v$ is given in \cite{PrIs}. In our problem we do not need this global construction; for simplicity we construct the functions $u,v$ explicitly, using the Kruskal coordinates of the Kerr space-times. In standard Boyer-Lindquist coordinates $(r,t,\theta,\phi)\in (r_+,\infty)\times\mathbb{R}\times(0,\pi)\times\mathbb{S}^1$, $r_\pm=m\pm(m^2-a^2)^{1/2}$ (see the appendix), the Kerr metric on the dense open subset $\widetilde{\mathbf{E}}^4$ of $\mathbf{E}^4$ is
\begin{equation}\label{la1}
ds^2=-\frac{\rho^2\Delta}{\Sigma^2}(dt)^2+\frac{\Sigma^2(\sin\theta)^2}{\rho^2}(d\phi-\omega dt)^2+\frac{\rh^2}{\Delta}(dr)^2+\rho^2(d\theta)^2,
\end{equation}
where
\begin{equation}\label{la2}
\begin{cases}
&\Delta=r^2+a^2-2mr;\\
&\rho^2=r^2+a^2(\cos\theta)^2;\\
&\Sigma^2=(r^2+a^2)\rho^2+2mra^2(\sin\theta)^2=(r^2+a^2)^2-a^2(\sin\theta)^2\Delta;\\
&\omega=\frac{2amr}{\Sigma^2}.
\end{cases}
\end{equation}
We define the function $r_\ast:(r_+,\infty)\to\mathbb{R}$,
\begin{equation}\label{rg20}
r_\ast=\int\frac{r^2+a^2}{r^2+a^2-2mr}\,dr=r+
\frac{2mr_+}{r_+-r_-}\ln(r-r_+)-\frac{2mr_-}{r_+-r_-}\ln(r-r_-).
\end{equation}
With $c_0=\frac{2mr_+}{r_+-r_-}$, we make the changes of variables
\begin{equation}\label{rg3}
r_\ast=c_0(\ln u+\ln v)\text{ and }t=c_0(\ln u-\ln v),
\end{equation}
where $u,v\in(0,\infty)^2$, so
\begin{equation}\label{rg4}
\begin{cases}
&dr_\ast=c_0(u^{-1}du+v^{-1}dv);\\
&dt=c_0(u^{-1}du-v^{-1}dv).
\end{cases}
\end{equation}
We observe also that $\omega(r_+,\theta)=a/(2mr_+)$. We make the
change of variables
\begin{equation}\label{rg5}
\phi=\phi_\ast +\frac{a}{2mr_+}t=\phi_\ast+\frac{ac_0}{2mr_+}(\ln u-\ln v),
\end{equation}
with
\begin{equation}\label{rg6}
d\phi=d\phi_\ast+\frac{ac_0}{2mr_+}(u^{-1}du-v^{-1}dv).
\end{equation}
In the new coordinates $(u,v,\theta,\phi_\ast)\in (0,\infty)\times(0,\infty)\times (0,\pi)\times\mathbb{S}^1$ the Kerr metric
\eqref{la1} becomes
\begin{equation}\label{rg7}
\begin{split}
ds^2&=-\frac{c_0^2\rho^2\Delta^2a^2(\sin\theta)^2}{u^2v^2\Sigma^2(r^2+a^2)^2}
[v^2(du)^2+u^2(dv)^2]+\frac{2c_0^2\rho^2\Delta}{uv}\Big(\frac{1}{\Sigma^2}
+\frac{1}{(r^2+a^2)^2}\Big)du dv\\
&+\frac{\Sigma^2(\sin\theta)^2}{\rho^2}\Big[d\phi_\ast-\frac{c_0\widetilde{\omega}}{uv}
(v du-u dv)\Big]^2+\rho^2(d\theta)^2.
\end{split}
\end{equation}
where $\widetilde{\omega}=\omega-a/(2mr_+)$.

We restrict to the region
\begin{equation*}
\widetilde{\mathbf{O}}=\{(u,v,\theta,\phi_\ast)\in(-c_1,1)^2\times(0,\pi)\times\mathbb{S}^1\},
\end{equation*}
for some constant $c_1>0$ sufficiently small. We examine the coefficients that appear in the Kerr metric \eqref{rg7}. Since $e^{r_\ast/c_0}=uv$ and $r_-<r_+$ (since $a\in[0,m)$), it follows from \eqref{rg20} that $r$ is a smooth function of $uv$ in $\widetilde{\mathbf{O}}$. Moreover $\Delta/(uv)=(r-r_-)(r-r_+)/(uv)$ and $\widetilde{\omega}/(uv)$ are smooth function of $uv$ in $\widetilde{\mathbf{O}}$. Thus the Kerr metric \eqref{rg7} is smooth in $\widetilde{\mathbf{O}}$, and we identify $\widetilde{\mathbf{O}}$ with the corresponding open subset of the Kerr space. We let $\mathbf{O}$ be any open neighborhood of $\, S_0$ contained  in the closure of $\widetilde{\mathbf{O}}$ in ${\mathbf{M}}^4$ (by adding  in the points corresponding to $\theta\in\{0,\pi\}$). It is easy to see that the coordinate functions  $u,v:{\mathbf{O}}\to(-c_1,1)$ verify the conclusions of the lemma.
\end{proof}

Assume now that $x_0\in \, S_0$, $B_r=\{x\in\mathbb{R}^4:|x|<r\}$, and $\Phi^{x_0}:B_1\to\mathbf{O}$, $\Phi^{x_0}(0)=x_0$, is a smooth coordinate chart around $x_0$. In view of \eqref{ao1}
\begin{equation}\label{gt1}
\delta_0=\inf_{\, S_0}\,\,\g(L_3,L_4)>0.
\end{equation}
In follows from \eqref{ao1} that there is $\eps_0\in(0,1/2]$ such that
\begin{equation}\label{gt2}
\g(L_3,L_4)> \delta_0/2\text{ and }|\g(L_3,L_3)|+|\g(L_4,L_4)|< \delta_0/100\,\text{ on }B_{\eps_0}(x_0),
\end{equation}
where $B_r(x_0)=\Phi^{x_0}(B_r)$. Thus we can fix smooth vector fields $L_1,L_2\in\mathbb{T}(B_{\eps_0}(x_0))$ such that
\begin{equation}\label{gt2.2}
\begin{split}
&\g(L_1,L_1)=\g(L_2,L_2)=1;\\
&\g(L_1,L_2)=\g(L_1,L_3)=\g(L_2,L_3)=\g(L_1,L_4)=\g(L_2,L_4)=0.
\end{split}
\end{equation}
We define also the smooth function $N^{x_0}:B_1(x_0)\to[0,\infty)$
\begin{equation*}
N^{x_0}(x)=|(\Phi^{x_0})^{-1}(x)|^2.
\end{equation*} 
The main result in this section is the following Carleman estimate:

\begin{proposition}\label{Car}
There is $\eps\in (0,\ep_0)$ sufficiently small and $C_\eps$ sufficiently large such that for any $\lambda\geq C_\eps$ and any $\phi\in C^\infty_0(B_{\eps^{10}}(x_0))$
\begin{equation}\label{Car1}
\lambda \|e^{-\lambda f_\eps}\phi\|_{L^2}+\|e^{-\lambda f_\eps}|D^1\phi|\,\|_{L^2}\leq C_\eps\lambda^{-1/2}\|e^{-\lambda f_\eps}\,\square_{\g}\phi\|_{L^2},
\end{equation}
where
\begin{equation}\label{Car2}
f_\eps=\ln[\ep^{-1}(u+\eps)(v+\eps)+\eps^{12}N^{x_0}].
\end{equation}
\end{proposition}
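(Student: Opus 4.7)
The plan is to deduce Proposition \ref{Car} as a direct application of the conditional Carleman estimate Proposition \ref{Cargen}, with the choices $V\equiv 0$, $h_\eps=\eps^{-1}(u+\eps)(v+\eps)$, and $e_\eps=\eps^{12}N^{x_0}$. Under these choices $\ln(h_\eps+e_\eps)$ coincides with the weight $f_\eps$ of \eqref{Car2}, and since $V=0$ the last term $\eps^{-6}\|e^{-\lambda f_\eps}V(\phi)\|_{L^2}$ in \eqref{Car1gen} disappears, yielding precisely \eqref{Car1}. The whole task therefore reduces to checking the hypotheses of Definition \ref{psconvex} at the fixed point $x_0\in S_0$.

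The normalization $h_\eps(x_0)=\eps$ is immediate from $u(x_0)=v(x_0)=0$, and the bound $\sum_{j=1}^4\eps^j|D^jh_\eps|\leq\eps/\eps_1$ on $B_{\eps^{10}}(x_0)$ follows by direct differentiation of $\eps^{-1}uv+u+v+\eps$ (the worst contribution is $|D^2h_\eps|=O(\eps^{-1})$); also $|V(h_\eps)(x_0)|=0$ trivially. The perturbation $e_\eps=\eps^{12}N^{x_0}$ is negligible in the sense of \eqref{smallweight} since $N^{x_0}$ is smooth and vanishes to second order at $x_0$, giving $|D^je_\eps|\leq C(A_0)\eps^{12}\leq\eps^{10}$ on $B_{\eps^{10}}(x_0)$ once $\eps$ is small. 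The real content is the pseudo-convexity conditions \eqref{po3.2} and \eqref{po3}.

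The key pointwise identity at $x_0$ is
\begin{equation*}
\partial_\al h_\eps(x_0)=\partial_\al u+\partial_\al v,\qquad \D_\al\D_\be h_\eps(x_0)=\D_\al\D_\be u+\D_\al\D_\be v+2\eps^{-1}\partial_{(\al}u\,\partial_{\be)}v,
\end{equation*}
where the singular Hessian term reflects the product structure of $h_\eps$. I would decompose tangent vectors in the null frame $\{L_1,L_2,L_3,L_4\}$ of \eqref{gt2.2}. Writing $X=\sum a_iL_i$, $g_0=\g(L_3,L_4)(x_0)>0$, and using $\g(L_i,L_i)=0$ for $i=3,4$ together with $L_3(u)=L_4(v)=0$ and $L_3(v)=L_4(u)=g_0$, one computes $X(h_\eps)(x_0)=(a_3+a_4)g_0$, $\g(X,X)=a_1^2+a_2^2+2g_0\,a_3a_4$, and $X^\al X^\be\D_\al\D_\be h_\eps(x_0)=2\eps^{-1}g_0^{\,2}a_3a_4+O(1)|a|^2$. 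The elementary identity
\begin{equation*}
(a_3+a_4)^2-2\eps a_3a_4=a_3^2+a_4^2+2(1-\eps)a_3a_4\geq\eps(a_3^2+a_4^2)
\end{equation*}
then forces
\begin{equation*}
\eps^{-2}|X(h_\eps)|^2-X^\al X^\be\D_\al\D_\be h_\eps(x_0)\geq \eps^{-1}g_0^{\,2}(a_3^2+a_4^2)-O(1)|a|^2,
\end{equation*}
and choosing $\mu=K$ large in \eqref{po3} uses $\mu\,\g(X,X)$ to absorb the $O(1)|a|^2$ error and to deliver strict positivity in the spacelike $L_1,L_2$ directions. Condition \eqref{po3.2} follows from the same cancellation applied to $\nabla h_\eps(x_0)=L_3+L_4$, yielding $|\nabla h_\eps|^4-\eps\,\D^\al h_\eps\D^\be h_\eps\D_\al\D_\be h_\eps=4g_0^{\,2}-2g_0^{\,2}+O(\eps)=2g_0^{\,2}+O(\eps)>0$.

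The main conceptual obstacle is seeing that the $\eps^{-1}$ blow-up of $\D^2h_\eps$ at the bifurcate sphere is \emph{exactly} compensated by the $\eps^{-2}$ weighting of $|X(h_\eps)|^2$ in \eqref{po3}: the combination loses precisely one power of $\eps$, which is just enough to dominate the bounded error terms and still leave the coercive remainder $\eps^{-1}g_0^{\,2}(a_3^2+a_4^2)$. This is the structural reason why a bifurcate characteristic surface, with its two transversal null generators $L_3,L_4$, admits a singular Carleman weight whose singularity is controlled. Once Definition \ref{psconvex} is verified with $\eps_1$ depending only on $g_0$ and $A_0$, Proposition \ref{Cargen} delivers \eqref{Car1} directly.
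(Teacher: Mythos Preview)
Your proposal is correct and follows essentially the same route as the paper: apply Proposition \ref{Cargen} with $V=0$, $h_\eps=\eps^{-1}(u+\eps)(v+\eps)$, $e_\eps=\eps^{12}N^{x_0}$, and verify \eqref{po5}--\eqref{po3} at $x_0$ using the null frame $L_1,\ldots,L_4$, where the singular part $(\D^2h_\eps)_{34}=\eps^{-1}\Omega^2+O(1)$ is exactly balanced by the $\eps^{-2}|X(h_\eps)|^2$ term. The only cosmetic differences are that the paper takes $\mu=\eps_1^{-1/2}$ rather than a fixed large constant, and expands the quadratic form directly rather than isolating your elementary inequality $(a_3+a_4)^2-2\eps\,a_3a_4\geq\eps(a_3^2+a_4^2)$; both choices lead to the same coercivity estimate.
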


\begin{proof}[Proof of Proposition \ref{Car}] We apply Proposition \ref{Cargen} with $V=0$. It is clear that $\eps^{12}N^{x_0}$ is a negligible perturbation, in the sense of \eqref{smallweight}, for $\eps$ sufficiently small. It remains to prove that there is $\eps_1>0$ such that the family of weights $\{h_\eps\}_{\eps\in (0,\eps_1)}$,
\begin{equation}\label{Car2.2}
h_\eps=\eps^{-1}(u+\eps)(v+\eps)
\end{equation}
satisfies conditions \eqref{po5}, \eqref{po3.2} and \eqref{po3}.

Let $\widetilde{C}$ denote constants $\geq 1$ that may depend only on the predefined geometric quantities $\eps_0$, $\delta_0$, and a uniform bound in $B_{\eps_0}(x_0)$ of $|D^j\g_{\al\be}|$, $|D^j\g^{\al\be}|$, $|D^j u|$, $|D^j v|$, $j=0,\ldots,6$. Since $u(x_0)=v(x_0)=0$, the definition \eqref{Car2.2} shows easily that condition \eqref{po5} is satisfied, provided that $\eps_1\leq \widetilde{C}^{-1}$.

Relative to the frame $L_1, L_2, L_3,
L_4$ the metric $\g$ takes the form,
\begin{equation}\label{Car30}
\begin{cases}
&\g_{ab}=\de_{ab}, \quad \g_{a3}=\g_{a4}=0,\quad   a,b=1,2\\
&\g_{33}=g_3,\quad \g_{44}=g_4,  \quad \g_{34}= \Omega,
\end{cases}
\end{equation}
in $B_{\eps_0}(x_0)$, where $g_3=\g(L_3,L_3)$, $g_4=\g(L_4,L_4)$, $\Omega=\g(L_3,L_4)$. Also, for the inverse metric,
\begin{equation}\label{Car52}
\begin{cases}
&\g^{ab}=\de^{ab}, \quad \g^{a3}=\g^{a4}=0,\quad   a,b=1,2\\
&  \g^{33}=g'_3,\quad \g^{44}=g'_4,  \quad \g^{34}= \Omega',
\end{cases}
\end{equation}
where $g'_3=-g_4/(\Omega^2-g_3g_4)$, $g'_4=-g_3/(\Omega^2-g_3g_4)$, $\Omega'=\Omega/(\Omega^2-g_3g_4)$. Recall that $\Omega\geq \delta_0/2$ in $B_{\eps_0}(x_0)$, see \eqref{gt2}, $g_3=0$ on $\mathcal{H}^+\cap B_{\eps_0}(x_0)$, $g_4=0$ on $\mathcal{H}^-\cap B_{\eps_0}(x_0)$, see \eqref{ao1}. Thus, using Lemma \ref{ao} (c),
\begin{equation}\label{ao99}
|g_3|\leq \widetilde{C}u\quad\text{and}\quad|g_4|\leq\widetilde{C}v\quad\text{ in }B_{\eps_0}(x_0).
\end{equation}

We denote by $O(1)$ any quantity with absolute value bounded by a constant $\widetilde{C}$ as before. In view of the definitions of $u,v,L_1,L_2,L_3,L_4$ 
we have,
\begin{equation}\label{lubounds}
L_1(u)=L_2(u)=L_1(v)=L_2(v)=0,\quad L_3(u)=g_3,\quad L_4(v)=g_4,\quad L_4(u)=L_3(v)=\Om.
\end{equation}
Thus
\begin{equation}\label{eq:Car1.1}
\begin{split}
&L_4(h_\ep)=\eps^{-1}(v+\ep)\Om+\eps^{-1}(u+\eps)g_4,
\quad L_3(h_\ep)=\eps^{-1}(u+\eps)\Om+\eps^{-1}(v+\eps)g_3,\\
&L_1(h_\ep)=L_2(h_\ep)=0,
\end{split}
\end{equation}
and, using \eqref{ao99}, \eqref{lubounds}, and \eqref{eq:Car1.1}, in $B_{\ep^{10}}(x_0)$,
\bea\label{car:D2f}
\begin{cases}
&(\D^2h_\ep)_{34}=(\D^2h_\ep)_{43}=\eps^{-1}\Om^2+O(1),\\
&(\D^2h_\ep)_{33}=O(1),\quad(\D^2h_\ep)_{44}=O(1),\quad(\D^2h_\ep)_{ab}=O(1),\qquad a,b=1,2,\\
&(\D^2h_\ep)_{3a}=O(1),\quad(\D^2h_\ep)_{4a}=O(1),\qquad a=1,2.
\end{cases}
\eea

Using \eqref{Car52}, \eqref{eq:Car1.1}, \eqref{car:D2f}, and $g_3(x_0)=g_4(x_0)=0$ we compute
\begin{equation*}
\D^\al h_\eps(x_0)\D^\be h_\eps(x_0)(\D_\al h_\eps\D_\be h_\eps-\eps\D_\al\D_\be h_\eps)(x_0)=2\Omega^2+\eps O(1)\geq \delta_0^2
\end{equation*}
if $\eps_1$ is sufficiently small. Thus condition \eqref{po3.2} is satisfied provided $\eps_1\leq \widetilde{C}^{-1}$.

Assume now $Y=Y^\al L_\al$ is a vector in $\mathbb{T}_{x_0}(\mathbf{M}^4)$. We fix $\mu=\eps_1^{-1/2}$ and compute, using \eqref{eq:Car1.1}, \eqref{car:D2f}, and $g_3(x_0)=g_4(x_0)=1$,
\begin{equation*}
\begin{split}
&Y^\al Y^\be(\mu\g_{\al\be}-\D_\al\D_\be h_\eps)(x_0)+\eps^{-2}|Y^\al\D_\al h_\eps|^2\\
&=\mu((Y^1)^2+(Y^2)^2+2\Omega Y^3Y^4)-2\eps^{-1}\Omega^2Y^3Y^4+\eps^{-2}\Omega^2(Y^3+Y^4)^2+O(1)\sum_{\al=1}^4(Y^\al)^2\\
&\geq (\mu/2)[(Y^1)^2+(Y^2)^2]+\Omega^2(\eps^{-1}/2)[(Y^3)^2+(Y^4)^2]\\
&\geq (Y^1)^2+(Y^2)^2+(Y^3)^2+(Y^4)^2
\end{split}
\end{equation*}
if $\eps_1$ is sufficiently small. We notice now that we can write $Y=X^\al\partial_\al$ in the coordinate frame $\partial_1,\partial_2,\partial_3,\partial_4$, and $|X^\alpha|\leq \widetilde{C}(|Y^1|+|Y^2|+|Y^3|+|Y^4|)$ for $\al=1,2,3,4$. Thus condition \eqref{po3} is satisfied provided $\eps_1\leq \widetilde{C}^{-1}$, which completes the proof of the lemma.
\end{proof}

\subsection{The second Carleman inequality in Kerr spaces}\label{CarEst2}

In this section we prove a Carleman estimate for functions supported in small open sets in $\mathbf{E}^4$. Assume that $x_0\in\mathbf{E}^4$ and $\Phi^{x_0}:B_1\to\mathbf{E}^4$, $\Phi^{x_0}(0)=x_0$, is a smooth coordinate chart around $x_0$. We define the smooth function $N^{x_0}:B_1(x_0)\to[0,\infty)$, $N^{x_0}(x)=|(\Phi^{x_0})^{-1}(x)|^2$ as before.

We use the notation in the appendix. The coordinate function $r:\widetilde{\mathbf{E}}^4\to(r_+,\infty)$ extends to a smooth function $r:\mathbf{E}^4\to(r_+,\infty)$. The main result in this subsection is the following Carleman estimate:

\begin{proposition}\label{Carl2}
There is $\eps\in (0,1/2]$ sufficiently small and $\widetilde{C}_\eps$ sufficiently large such that for any $\lambda\geq\widetilde{C}_\eps$ and any $\phi\in C^\infty_0(B_{\eps^{10}}(x_0))$
\begin{equation}\label{Carb1}
\lambda \|e^{-\lambda \f_\eps}\phi\|_{L^2}+\|e^{-\lambda \f_\eps}|D^1\phi\,|\,\|_{L^2}\leq \widetilde{C}_\eps\lambda^{-1/2}\|e^{-\lambda \f_\eps}\,\square_{\g}\phi\|_{L^2}+\eps^{-6}\|e^{-\lambda \f_\eps}\xi(\phi)\,\|_{L^2},
\end{equation}
where, with $r_0=r(x_0)$,
\begin{equation}\label{Carb2}
\f_\eps=\ln[r-r_0+\eps+\eps^{12}N^{x_0}].
\end{equation}
\end{proposition}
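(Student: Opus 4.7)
The plan is to deduce Proposition \ref{Carl2} directly from the general conditional Carleman estimate Proposition \ref{Cargen}, applied with vector field $V=\xi$, weight family $h_\eps = r-r_0+\eps$ for $\eps\in(0,\eps_1)$, and negligible perturbation $e_\eps=\eps^{12}N^{x_0}$, so that $f_\eps=\ln(h_\eps+e_\eps)$ matches \eqref{Carb2}. The task then reduces to two points: (i) $e_\eps$ is negligible in the sense of \eqref{smallweight}, which is immediate from the smoothness and boundedness of $N^{x_0}$ on $B_{\eps^{10}}(x_0)$; and (ii) $\{h_\eps\}_{\eps\in(0,\eps_1)}$ is a $\xi$-conditional pseudo-convex family at $x_0$ in the sense of Definition \ref{psconvex}, for some $\eps_1>0$ depending only on the Kerr geometry and the coordinate chart $\Phi^{x_0}$.

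The conditions \eqref{po5} and \eqref{po3.2} are straightforward. For \eqref{po5}: $h_\eps(x_0)=\eps$ holds by construction; the bounds on $\eps^j|D^jh_\eps|$ follow from the smoothness and bounded derivatives of $r$ in the chart; and because the Kerr metric is stationary with $\xi=\partial_t$ in Boyer--Lindquist coordinates and $r$ is a coordinate independent of $t$, one has $\xi(h_\eps)\equiv 0$. For \eqref{po3.2}: the block structure of \eqref{la1} (no off-diagonal $dr$-terms) gives $\nabla h_\eps=\g^{rr}\partial_r$ and $\g^{\alpha\beta}\D_\alpha h_\eps\D_\beta h_\eps=\g^{rr}=\Delta/\rho^2$, strictly positive on the exterior $\{r>r_+\}$; the correction $\eps\D^\alpha h_\eps\D^\beta h_\eps\D_\alpha\D_\beta h_\eps$ is $O(\eps)$ and does not affect the lower bound for $\eps_1$ small.

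The substantive work lies in verifying \eqref{po3}. The penalty terms $\eps^{-2}|X^\alpha\xi_\alpha|^2$ and $\eps^{-2}|X^\alpha\D_\alpha h_\eps|^2$ absorb all $X$ with non-negligible projection onto $\xi$ or $\nabla h_\eps$, so the critical case is when $X$ lies close to the 2-plane $\Pi_{x_0}=\{X\in\mathbb{T}_{x_0}(\mathbf{K}^4):\g(X,\xi)=\g(X,\nabla r)=0\}$. Because of the block-diagonal structure of the Kerr metric, $\g(X,\nabla r)=0$ reduces to $X^r=0$, while $\g(X,\xi)=0$ imposes the single linear relation $\g_{tt}X^t+\g_{t\phi}X^\phi=0$, so $\Pi_{x_0}$ is parametrized by $(X^\theta,X^\phi)$. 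Using $\D_\mu\D_\nu r=-\Gamma^r_{\mu\nu}$ together with the same block structure, one computes that on $\Pi_{x_0}$
$$
\D^2 r(X,X) \;=\; \frac{\Delta}{2\rho^2}\,\partial_r\big(\g_{\mu\nu}X^\mu X^\nu\big).
$$
The quantitative inequality \eqref{po3} would then follow, for a suitable $\mu$ in the allowed range (chosen to dominate the signature defect of $\g$ restricted to $\Pi_{x_0}$), provided one establishes the qualitative fact that for every nonzero null $X\in\Pi_{x_0}$ one has $\D^2 r(X,X)<0$ strictly, uniformly on compact subsets of $\mathbf{E}^4$ and for all $a\in[0,m)$.

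The hard part is precisely this qualitative pseudo-convexity, which is what distinguishes \eqref{HoCond2} from H\"ormander's classical \eqref{HoCond}. The classical version is known to fail in Kerr because of trapped null geodesics on certain constant-$r$ hypersurfaces in the ergoregion; but those geodesics carry strictly positive energy $-\g(\xi,\dot\gamma)\ne 0$, which is exactly what the extra orthogonality constraint $\g(X,\xi)=0$ excludes. To verify the inequality I would substitute $X^r=0$ and $X^t=-(\g_{t\phi}/\g_{tt})X^\phi$ into $\partial_r(\g_{\mu\nu}X^\mu X^\nu)$ using the explicit Kerr expressions \eqref{la1}--\eqref{la2}, reducing the claim to showing that the resulting quadratic form in $(X^\theta,X^\phi)$, restricted to the null cone of $\Pi_{x_0}$, is negative definite for every $a\in[0,m)$. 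This is the geometric property of Kerr advertised in the introduction, and is the only genuine computation in the argument.
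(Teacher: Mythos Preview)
Your high-level plan matches the paper exactly: apply Proposition~\ref{Cargen} with $V=\xi$, $h_\eps=r-r_0+\eps$, and $e_\eps=\eps^{12}N^{x_0}$; the verifications of \eqref{smallweight}, \eqref{po5}, and \eqref{po3.2} are as you describe. The approaches to \eqref{po3} differ in execution. The paper works in the orthonormal frame $e_0,\ldots,e_3$ of \eqref{g3}, fixes the explicit value $\mu=3\Delta r/(2\rho^4)$, substitutes $Y^0$ in terms of $Z=\rho^2\sqrt{\Delta}\,Y^0+2amr\sin\theta\,Y^1$ (i.e.\ the $\xi$-constraint), and verifies the full quantitative quadratic-form inequality directly, the nontrivial step being a lower bound on the $(Y^1)^2$-coefficient that reduces to the algebraic inequalities \eqref{de5}--\eqref{de6}. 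You instead work in Boyer--Lindquist coordinates, derive the (correct and elegant) identity $\D^2 r(X,X)=\tfrac{\Delta}{2\rho^2}\,\partial_r(\g_{\mu\nu}X^\mu X^\nu)$ for $X^r=0$, and propose to reduce \eqref{po3} to the qualitative condition \eqref{HoCond2}, then verify the latter.

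Two points leave your outline incomplete rather than wrong. First, the passage from \eqref{HoCond2} to \eqref{po3} is asserted but not argued: one must show that strict negativity of $\D^2 r$ on the null cone of the $2$-plane $\Pi_{x_0}$ yields a choice of $\mu$ making $\mu\g-\D^2 r$ positive definite on all of $\Pi_{x_0}$ (a short $2\times 2$ linear-algebra lemma), and then that the $\eps^{-2}$ penalty terms absorb the transverse directions with uniform constants. The paper sidesteps this by working directly with an explicit $\mu$. Second, you do not actually perform what you correctly identify as the only genuine computation: the sign of $\partial_r(\g_{\mu\nu}X^\mu X^\nu)$ on the null cone of $\Pi_{x_0}$, uniformly for $a\in[0,m)$, is precisely the Kerr-specific algebraic fact that requires checking, and is the content of the paper's bound via \eqref{de5}--\eqref{de6}. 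A minor technical point: your parametrization $X^t=-(\g_{t\phi}/\g_{tt})X^\phi$ of $\Pi_{x_0}$ degenerates on the ergosphere boundary $\g_{tt}=0$; the orthonormal-frame computation in the paper is uniform across this set.
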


\begin{proof}[Proof of Proposition \ref{Carl2}] As in the proof of Proposition \ref{Car}, we will use the notation $\widetilde{C}$ to denote various constants in $[1,\infty)$ that may depend only on the chart $\Phi$ and the position of $x_0$  in $\mathbf{E}^4$ (i.e. on $(r(x_0)-r_+)^{-1}+(r(x_0)-r_+)]$), and $O(1)$ to denote quantities bounded in absolute value by a constant $\widetilde{C}$. It is important to keep in mind that $r(x_0)>r_+$, i.e. $x_0\in\mathbf{E}^4$. We apply Proposition \ref{Cargen} with $V=\xi$. It suffices to prove that there is $\eps_1>0$ such that the family of weights $h_\eps\}_{\eps\in(0,\eps_1)}$,
\begin{equation}\label{weight2}
h_\eps=r-r_0+\eps
\end{equation}
satisfies conditions \eqref{po5}, \eqref{po3.2}, and \eqref{po3}.

Condition \eqref{po5} is clear if $\eps_1$ is sufficiently small, since $\xi(h_\eps)=0$. To prove conditions \eqref{po3.2} and \eqref{po3}, with the notation in section \ref{explicit}, we work in the orthonormal frame $e_0,e_1,e_2,e_3$ defined in \eqref{g3}. We have
\begin{equation}\label{tp1}
\D_0(h_\eps)=\D_1(h_\eps)=\D_3(h_\eps)=0,\qquad \D_2(h_\eps)=(\Delta/\rho^2)^{1/2}.
\end{equation}
Using the table \eqref{de1}, we have
\begin{equation}\label{tp2}
\begin{split}
-&\D_0\D_0 h_\ep=\frac{\Delta}{\rho^2}\Big(\frac{r}{\rho^2}+\frac{r-m}{\Delta}-\frac{Y}{\Sigma^2}\Big)\\
-&\D_0\D_1 h_\ep=-\frac{\Delta}{\rho^2}\cdot \frac{ma\sin\theta}{\rho^2\sqrt{\Delta}\Sigma^2}(2rY-\Sigma^2)\\
-&\D_1\D_1 h_\ep=-\frac{\Delta}{\rho^2}\Big(\frac{Y}{\Sigma^2}-\frac{r}{\rho^2}\Big)\\
-&\D_2\D_2 h_\ep=\frac{\Delta}{\rho^2}\Big(\frac{r}{\rho^2}-\frac{r-m}{\Delta}\Big)\\
-&\D_2\D_3 h_\ep=-\frac{\sqrt{\Delta}a^2\sin\theta\cos\theta}{\rho^4}\\
-&\D_3\D_3 h_\ep=-\frac{\Delta r}{\rho^4}\\
 &\D_0\D_2 h_\ep=\D_0\D_3 h_\ep=\D_1\D_2 h_\ep=\D_1\D_3 h_\ep=0.
\end{split}
\end{equation}
It follows that
\begin{equation*}
\D^\alpha h_\eps(x_0)\D^\be h_\eps(x_0)(\D_\al h_\eps\D_\be h_\eps-\eps\D_\al\D_\be h_\eps)(x_0)=\Delta^2/\rho^2+\eps O(1)
\end{equation*}
which verifies condition \eqref{po3.2} if $\eps_1$ is sufficiently small.

To verify condition \eqref{po3} we fix
\begin{equation}\label{tp3}
\mu=\frac{3\Delta r}{2\rho^4} 
\end{equation}
and use the formula (compare with \eqref{g10} and \eqref{k3})
\begin{equation}\label{Carb65}
\xi=\frac{\rho\sqrt{\Delta}}{\Sigma}e_0-\frac{2amr\sin\theta}{\rho\Sigma}e_1.
\end{equation}
Assume $X=Y^0e_0+Y^1e_1+Y^2e_2+Y^3e_3$ is a vector expressed in the frame $e_\al$. We compute
\begin{equation}\label{tp5}
\begin{split}
&Y^\al Y^\be(\mu\g_{\al\be}-\D_\al\D_\be h_\eps)(x_0)+\eps^{-2}(|Y^\al \xi_\al(x_0)|^2+|Y^\al\D_\al h_\eps(x_0)|^2)\\
&=(Y^0)^2(-\mu-\D_0\D_0 h_\eps)+(Y^1)^2(\mu-\D_1\D_1 h_\eps)+2Y^0Y^1(-\D_0\D_1h_\eps)\\
&+(Y^2)^2(\mu-\D_2\D_2 h_\eps)+(Y^3)^2(\mu-\D_3\D_3 h_\eps)+2Y^2Y^3(-\D_2\D_3h_\eps)\\
&+\eps^{-2}\frac{(\rho^2\sqrt{\Delta}Y^0+2amr(\sin\theta)Y^1)^2}{\rho^2\Sigma^2}+\eps^{-2}\frac{\Delta(Y^2)^2}{\rho^4}.
\end{split}
\end{equation}
Let $Z=\rho^2\sqrt{\Delta}Y^0+2amr(\sin\theta)Y^1$, thus 
\begin{equation*}
Y^0=\frac{Z-2amr(\sin\theta)Y^1}{\rho^2\sqrt{\Delta}}=\alpha Y^1+\beta Z.
\end{equation*}
Using also $\mu-\D_3\D_3h_\ep=(\Delta r)/(2\rho^4)$, the right-hand side of \eqref{tp5} becomes
\begin{equation}\label{tp6}
\begin{split}
&(Y^2)^2(\eps^{-2}\Delta\rho^{-4}+\mu-\D_2\D_2 h_\eps)+(Y^3)^2(\Delta r\rho^{-4})/2-2Y^2Y^3\cdot\D_2\D_3h_\eps\\
&+Z^2[\eps^{-2}\rho^{-2}\Sigma^{-2}+\beta^2(-\mu-\D_0\D_0h_\eps)]\\
&+(Y^1)^2[\al^2(-\mu-\D_0\D_0h_\eps)-2\alpha\D_0\D_1h_\eps+\mu-\D_1\D_1h_\eps]\\
&+2Y^1Z[\al\be(-\mu-\D_0\D_0h_\eps)-\be\D_0\D_1h_\eps].
\end{split}
\end{equation}
It is clear that the first line of the expression above is bounded from below by $$\widetilde{C}^{-1}(\eps^{-2}(Y^2)^2+(Y^3)^2)$$
if $\eps$ is sufficiently small, since $\Delta\geq\widetilde{C}^{-1}$. The main term we need to bound from below is the coefficient of $(Y^1)^2$ in \eqref{tp6}. We use the table \eqref{tp2} and the definitions of $\alpha$ and $\mu$; after several simplifications this term is equal to
\begin{equation*}
\frac{5\Delta r}{2\rho^4}-\frac{\Delta Y}{\rho^2\Sigma^2}+\frac{4a^2m^2r(\sin\theta)^2}{\rho^6}\Big(-\frac{r^2}{2\rho^2}+\frac{rY}{\Sigma^2}+\frac{mr-a^2}{\Delta}\Big).
\end{equation*}
In view of \eqref{de5} and \eqref{de6} this is bounded from below by $(\Delta r)/(2\rho^4)$. Thus the sum of the last three lines of \eqref{tp6} is bounded from below by $$\widetilde{C}^{-1}(\eps^{-2}Z^2+(Y^1)^2)$$
if $\eps$ is sufficiently small. It follows that
\begin{equation*}
\begin{split}
&Y^\al Y^\be(\mu\g_{\al\be}-\D_\al\D_\be h_\eps)(x_0)+\eps^{-2}(|Y^\al \xi_\al(x_0)|^2+|Y^\al\D_\al h_\eps(x_0)|^2)\\
&\geq\widetilde{C}^{-1}[(Y^0)^2+(Y^1)^2+\eps^{-2}(Y^2)^2+(Y^3)^2]
\end{split}
\end{equation*}
if $\eps$ is sufficiently small. The condition \eqref{po3} is verified, which completes the proof of the proposition.
\end{proof}

\subsection{Vanishing of the tensor $\mathcal{S}$}\label{proof}

In this subsection we prove Theorem \ref{Main1kerr}. Arguments showing how to use Carleman inequalities to prove uniqueness are standard. We provide all the details here for the sake of completeness. Some care is needed at the first step, in Lemma \ref{Lemmaa1} below, since we do not assume that derivatives of the tensor $\Ss$ vanish on the horizon.

We show first that the tensor $\Ss$ vanishes in a neighborhood of the bifurcate sphere $\, S_0$ in $\mathbf{E}^4$.

\begin{lemma}\label{Lemmaa1}
With the notation in Theorem \ref{Main1kerr}, there is an open set $\mathbf{O}'\subseteq{\mathbf{M}}^4$, $\, S_0\subseteq \mathbf{O}'$, such that
\begin{equation*}
\Ss\equiv 0\text{ in }\mathbf{O}'\cap\mathbf{E}^4.
\end{equation*}
\end{lemma}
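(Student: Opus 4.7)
The plan is to apply the Carleman inequality of Proposition \ref{Car} to a smooth zero-extension of $\Ss$ across $\HH$ near each point $x_0\in\, S_0$, and then cover the compact set $\, S_0$ by finitely many such neighborhoods. The decisive preparatory step is to prove that every derivative $D^j\Ss$ vanishes on $\HH$ in a neighborhood of $\, S_0$, so that the extension is smooth and the Carleman machinery can be deployed.

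To show $D^j\Ss\equiv 0$ on $\HH\cap\mathbf{O}$ near $\, S_0$ for every $j\geq 0$, I would work in the null frame $L_1,L_2,L_3,L_4$ of \eqref{gt2.2} with metric and inverse-metric components \eqref{Car30}, \eqref{Car52}. Because $\Ss\equiv 0$ on $\HH^+$, all tangential derivatives $L_3\Ss$, $L_1\Ss$, $L_2\Ss$ vanish there; only the transverse derivative $L_4\Ss$ is a priori unknown on $\HH^+$. In the null frame,
\[
\square_\g=\g^{ab}L_aL_b+\g^{33}L_3L_3+\g^{44}L_4L_4+2\g^{34}L_3L_4+(\text{first order in }L_\al),
\]
and because $\g^{44}=-g_3/(\Omega^2-g_3g_4)$ vanishes on $\HH^+$ (since $g_3|_{\HH^+}=0$), the $L_4L_4\Ss$ term disappears, while $L_3L_3\Ss$ and $L_aL_b\Ss$ vanish on $\HH^+$ as tangential derivatives of the vanishing quantity $\Ss$. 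The first equation in \eqref{va4}, restricted to $\HH^+$, therefore collapses to a first-order transport equation of the form
\[
2\g^{34}L_3(L_4\Ss)+\mathrm{P}(L_4\Ss)=0,
\]
in which $\mathrm{P}$ is a zero-order operator gathering the Christoffel contributions and the transverse part of $\BB$. Since $L_4$ is tangent to $\HH^-$ and $\Ss\equiv 0$ on $\HH^-$, the initial value $L_4\Ss$ vanishes on $\, S_0$, and integration along the null generators of $\HH^+$ (the integral curves of $L_3$) yields $L_4\Ss\equiv 0$ on $\HH^+\cap\mathbf{O}$ in a neighborhood of $\, S_0$, so $D\Ss\equiv 0$ there. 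The symmetric argument works on $\HH^-$, and an induction on $j$---commuting the frame derivatives $L_\al$ through \eqref{va4} and repeating the transport argument for $L_4^{j+1}\Ss$ on $\HH^+$ and $L_3^{j+1}\Ss$ on $\HH^-$---extends this to $D^j\Ss\equiv 0$ on $\HH$ near $\, S_0$ for every $j$.

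With all derivatives of $\Ss$ vanishing on $\HH$ near $\, S_0$, the zero-extension $\tilde\Ss$, defined as $\Ss$ on $\overline{\mathbf{E}^4}$ and $0$ on the complement of $\overline{\mathbf{E}^4}$, is smooth on an open neighborhood of $\, S_0$ in $\mathbf{M}^4$ and satisfies the pointwise bound $|\square_\g\tilde\Ss|\leq C(|\tilde\Ss|+|D\tilde\Ss|)$. Fix $x_0\in\, S_0$, choose $\eta\in C^\infty_0(B_{\eps^{10}}(x_0))$ equal to $1$ on a strictly smaller ball, and apply \eqref{Car1} componentwise (in the coordinate frame) to $\phi=\eta\cdot\tilde\Ss^{(I)}$. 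Summing over components and using the pointwise wave-equation bound for $\square_\g(\eta\tilde\Ss)=\eta\square_\g\tilde\Ss+[\square_\g,\eta]\tilde\Ss$, the $\eta$-supported contribution $\|e^{-\lambda f_\eps}\eta(|\tilde\Ss|+|D\tilde\Ss|)\|_{L^2}$ is absorbed into the left-hand side of \eqref{Car1} for $\lambda$ sufficiently large. The commutator term is handled by the standard argument exploiting the structure of the weight $f_\eps$ to produce an exponential gain in $\lambda$, so that letting $\lambda\to\infty$ forces $\tilde\Ss\equiv 0$ on a small ball around $x_0$, and therefore $\Ss\equiv 0$ in a neighborhood of $x_0$ inside $\mathbf{E}^4$.

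Finally, since $\, S_0$ is compact by Lemma \ref{ao}(a), it is covered by finitely many such balls, whose union gives the desired open neighborhood $\mathbf{O}'$ of $\, S_0$. The hard part is the content of the second paragraph: the hypothesis provides no information on derivatives of $\Ss$ at $\HH$, and all of the infinite-order vanishing must be extracted from the bifurcate characteristic structure, by exploiting the vanishing of $\g^{44}$ on $\HH^+$ (and $\g^{33}$ on $\HH^-$) to reduce the wave equation to a first-order transport equation along the null generators, whose initial data is supplied by the fact that $\Ss$ vanishes on the transverse null face $\HH^-$ (respectively $\HH^+$). The Carleman step itself, by contrast, is a direct application of Proposition \ref{Car}, essentially parallel to the arguments in \cite[Section 3]{Ion-K}.
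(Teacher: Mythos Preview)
Your argument is correct but follows a genuinely different route from the paper's. You first establish infinite-order vanishing of $\Ss$ on $\HH$ near $S_0$ by reducing the wave equation on each null piece $\HH^\pm$ to a transport ODE along the generators (using $\g^{44}|_{\HH^+}=0$, $\g^{33}|_{\HH^-}=0$) with zero initial data on $S_0$ supplied by the transverse piece, and then apply Proposition~\ref{Car} to the smooth zero-extension $\tilde\Ss$ with a single fixed cutoff $\eta$. The paper deliberately \emph{avoids} proving any vanishing of derivatives on $\HH$: it works entirely inside $\mathbf{E}^4$, multiplies the components $\phi_{(j_1\ldots j_k)}$ by a two-parameter cutoff $\eta(uv/\delta)(1-\eta(N^{x_0}/\eps^{20}))$, and sends $\delta\to 0$. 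The point is that the bifurcate structure, through the factorization $\phi=uv\cdot\phi'$ of Lemma~\ref{ao}(c), makes the near-horizon cutoff errors \emph{uniformly bounded} in $\delta$ (this is the content of \eqref{va23}--\eqref{va21}, and in particular the key cancellation \eqref{va89}), so they disappear in the limit by shrinking support alone.

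What each approach buys: your method front-loads the work into the characteristic transport argument (whose inductive step---commuting $L_4^k$ through $\square_\g$ and controlling the new $L_4(g_4')\cdot L_4^{k+1}\phi$ term as a zero-order contribution to the ODE---is correct but needs care), after which the Carleman step is routine. The paper's method needs only $\Ss\in C^2$ and one order of vanishing at $\HH$, which is why the same scheme also proves Theorem~\ref{Main1} for $\phi\in C^2$; the price is the delicate $\delta$-uniform estimate on $|\D_\al\phi\,\D^\al\widetilde\eta_{\delta,\eps}|$ and $|\square_\g\widetilde\eta_{\delta,\eps}|$, which hinges on the special form of the cutoff $\eta(uv/\delta)$ and the null-geometric bound $|\D^\al(uv)\D_\al(uv)|\le C\,uv$.
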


\begin{proof}[Proof of Lemma \ref{Lemmaa1}] We use the functions $u,v$ defined in Lemma \ref{ao} and the Carleman estimate in Proposition \ref{Car}. Since $\, S_0$ is compact, it suffices to prove that for every point $x_0\in\, S_0$ there is a neighborhood $\mathbf{O}'_{x_0}$ of $x_0$ such that $\Ss\equiv 0$ in $\mathbf{E}^4\cap\mathbf{O}'_{x_0}$. As in Proposition \ref{Car}, assume $\Phi^{x_0}:B_1\to\mathbf{O}$, $\Phi^{x_0}(0)=x_0$, is a smooth coordinate chart around $x_0$. With the notation in Proposition \ref{Car}, there are constants $\eps\in(0,\eps_0)$ and $\widetilde{C}\geq 1$ such that, for any $\lambda\geq \widetilde{C}$ and any $\phi\in C^\infty_0(B_{\eps^{10}}(x_0))$
\begin{equation}\label{Bar1}
\lambda \|e^{-\lambda f_\eps}\phi\|_{L^2}+\|e^{-\lambda f_\eps}|D^1\phi|\,\|_{L^2}\leq \widetilde{C}\lambda^{-1/2}\|e^{-\lambda f_\eps}\,\square_{\g}\phi\|_{L^2},
\end{equation}
where
\begin{equation}\label{Bar2}
f_\eps=\ln[\eps^{-1}(u+\eps)(v+\eps)+\eps^{12}N^{x_0}].
\end{equation}
The constant $\eps$ will remain fixed in this proof, and we assume implicitly it is sufficiently small as discussed in Proposition \ref{Car}. We will show that
\begin{equation}\label{Bar3}
\Ss\equiv 0\text{ in }B_{\eps^{40}}(x_0)\cap\mathbf{E}^4.
\end{equation}

For $(j_1,\ldots,j_k)\in\{1,2,3,4\}^k$ we define, using the coordinate chart $\Phi$,
\begin{equation}\label{va5}
\phi_{(j_1\ldots j_k)}=\mathcal{S}(\partial_{j_1},\ldots ,\partial_{j_k}).
\end{equation}
If $k=0$ we define $\phi=\Ss$ in $B_1(x_0)$. The functions $\phi_{(j_1\ldots j_k)}:B_1(x_0)\to\mathbb{C}$ are smooth. Let $\eta:\mathbb{R}\to[0,1]$ denote a smooth function supported
in $[1/2,\infty)$ and equal to $1$ in $[3/4,\infty)$. With $u,v$ as in Proposition \ref{ao}, for $\delta \in(0,1]$ we define
\begin{equation}\label{pr2}
\begin{split}
\phi^{\delta,\eps}_{(j_1\ldots j_k)}&=\phi_{(j_1\ldots j_k)}\cdot \mathbf{1}_{\mathbf{E}^4}\cdot \eta(uv/ \delta)\cdot \big(1-\eta(N^{x_0}/ \eps^{20})\big)\\
&=\phi_{(j_1\ldots j_k)}\cdot \widetilde{\eta}_{\delta,\eps}.
\end{split}
\end{equation}
Clearly, $\phi^{\delta,\eps}_{(j_1\ldots j_k)}\in C^\infty _0(B_{\eps^{10}}(x_0))$. We would like to apply the inequality \eqref{Bar1} to the functions $\phi^{\delta,\eps}_{(j_1\ldots j_k)}$, and then let $\delta \to 0$ and $\lambda \to\infty$ (in this order).

Using the definition \eqref{pr2}, we have
\begin{equation*}
\square_\g\phi^{\delta,\eps}_{(j_1\ldots j_k)}=\widetilde{\eta}_{\delta,\eps}\cdot\square_\g\phi_{(j_1\ldots j_k)}+2\D_\al\phi_{(j_1\ldots j_k)}\cdot \D^\al \widetilde{\eta}_{\delta,\eps}+\phi_{(j_1\ldots j_k)}\cdot \square_\g\widetilde{\eta}_{\delta,\eps}.
\end{equation*}
Using the Carleman inequality \eqref{Bar1}, for any $(j_1,\ldots j_k)\in\{1,2,3,4\}^k$ we have
\begin{equation}\label{va10}
\begin{split}
&\lambda\cdot \|e^{-\lambda f_{\eps}}\cdot \widetilde{\eta}_{\delta,\eps}\phi_{(j_1\ldots j_k)} \|_{L^2}+\|e^{-\lambda f_{\eps}}\cdot \widetilde{\eta}_{\delta,\eps}|D^1\phi_{(j_1\ldots j_k)}|\, \|_{L^2}\\
&\leq \widetilde{C}\lambda ^{-1/2}\cdot \|e^{-\lambda f_{\eps}}\cdot \widetilde{\eta}_{\delta,\eps}\square_\g\phi_{(j_1\ldots j_k)}\|_{L^2}\\
&+\widetilde{C}\Big[\|e^{-\lambda f_{\eps}}\cdot \D_\al\phi_{(j_1\ldots j_k)}\D^\al \widetilde{\eta}_{\delta,\eps} \|_{L^2}+\|e^{-\lambda f_{\eps}}\cdot \phi_{(j_1\ldots j_k)} ( |\square_\g\widetilde{\eta}_{\delta,\eps}|+|D^1\widetilde{\eta}_{\delta,\eps}| )\|_{L^2}\Big],
\end{split}
\end{equation}
for any $\lambda\geq \widetilde{C}$. We estimate now $|\square_\g\phi_{(j_1\ldots j_k)}|$. Using the first identity in \eqref{va4} and \eqref{va5}, in $B_{\eps^{10}}(x_0)$ we estimate pointwise
\begin{equation}\label{va11}
|\square_\g\phi_{(j_1\ldots j_k)}|\leq \widetilde{C}_{\Aa,\Bb}\sum_{l_1,\ldots,l_k}\big(|\,D^1\phi_{(l_1\ldots l_k)}|+|\phi_{(l_1\ldots l_k)}|\big),
\end{equation}
for some constant $\widetilde{C}_{\Aa,\Bb}$ that depends only on the tensors $\Aa$ and $\Bb$. We add up the inequalities \eqref{va10} over $(j_1,\ldots,j_k)\in\{1,2,3,4\}^k$. The key observation is that, in view of \eqref{va11}, the first term in the right-hand side can be absorbed into the left-hand side for $\lambda$ sufficiently large. Thus, for any $\lambda\geq \widetilde{C}_{\Aa,\Bb}$ and $\delta \in(0,1]$
\begin{equation}\label{va12}
\begin{split}
&\lambda\sum_{j_1,\ldots ,j_k}\|e^{-\lambda f_{\eps}}\cdot \widetilde{\eta}_{\delta,\eps}\phi_{(j_1\ldots j_k)} \|_{L^2}\\
&\leq \widetilde{C}\sum_{j_1,\ldots ,j_k}\Big[\|e^{-\lambda f_{\eps}}\cdot \D_\al\phi_{(j_1\ldots j_k)}\D^\al \widetilde{\eta}_{\delta,\eps} \|_{L^2}+\|e^{-\lambda f_{\eps}}\cdot \phi_{(j_1\ldots j_k)} ( |\square_\g\widetilde{\eta}_{\delta,\eps}|+|D^1\widetilde{\eta}_{\delta,\eps}| )\|_{L^2}\Big].
\end{split}
\end{equation}

We would like to let $\delta\to 0$ in \eqref{va12}. For this, we observe first that the functions $\D_\al\phi_{(j_1\ldots j_k)}\D^\al \widetilde{\eta}_{\delta,\eps}$ and $( |\square_\g\widetilde{\eta}_{\delta,\eps}|+|D^1\widetilde{\eta}_{\delta,\eps}| )$ vanish outside the set $\mathbf{A}_\delta\cup\widetilde{\mathbf{B}}_\eps$, where
\begin{equation*}
\begin{cases}
&\mathbf{A}_\delta=\{x\in B_{\eps^{10}}(x_0)\cap\mathbf{E}^4:uv\in(\delta/2,\delta )\};\\
&\widetilde{\mathbf{B}}_\eps=\{x\in B_{\eps^{10}}(x_0)\cap\mathbf{E}^4:N^{x_0}\in(\eps^{20}/2,\eps^{20})\}.
\end{cases}
\end{equation*}
In addition, since $\phi_{(j_1\ldots j_k)}=0$ on $\HH$ (using the hypothesis of Theorem \ref{Main1}), it follows from Proposition \ref{ao} (c) that there are smooth functions ${\phi'}_{(j_1\ldots j_k)}:\mathbf{O}\to\mathbb{C}$ such that
\begin{equation}\label{va20}
\phi_{(j_1\ldots j_k)}\big(1-\eta(N^{x_0})\big)=uv\cdot {\phi'}_{(j_1\ldots j_k)}\text{ in }\mathbf{O}\cap\mathbf{E}^4.
\end{equation}

We show now that
\begin{equation}\label{va23}
\begin{split}
|\square_\g\widetilde{\eta}_{\delta,\eps}|+|D^1\widetilde{\eta}_{\delta,\eps}|&\leq \widetilde{C}(\mathbf{1}_{\widetilde{\mathbf{B}}_\eps}+(1/ \delta)\mathbf{1}_{\mathbf{A}_\delta}).
\end{split}
\end{equation}
The  inequality for $|D^1\widetilde{\eta}_{\delta,\eps}|$ follows directly from the definition \eqref{pr2}. Also, using again the definition,
\begin{equation*}
|\D^\al\D_\al\widetilde{\eta}_{\delta,\eps}|\leq |\D^\al\D_\al(\mathbf{1}_{\mathbf{E}^4}\cdot \eta(uv/ \delta))|\cdot \big(1-\eta(N^{x_0}/ \eps^{20})\big)+\widetilde{C}(\mathbf{1}_{\widetilde{\mathbf{B}}_\eps}+(1/ \delta)\mathbf{1}_{\mathbf{A}_\delta}).
\end{equation*}
Thus, for \eqref{va23}, it suffices to prove that
\begin{equation}\label{va23.7}
\mathbf{1}_{\mathbf{E}^4}\cdot |\D^\al\D_\al(\eta(uv/ \delta))|\leq \widetilde{C}/ \delta\cdot \mathbf{1}_{\mathbf{A}_\delta}.
\end{equation}
Since $u,v,\eta$ are smooth functions, for \eqref{va23.7} it suffices to prove that
\begin{equation}\label{va89}
\delta^{-2}|\D^\al(uv)\D_\al(uv)|\leq\widetilde{C}/ \delta\text{ in }\mathbf{A}_\delta. 
\end{equation}
Since $uv\in [\delta/2,\delta]$ in $\mathbf{A}_\delta$, it suffices to prove that
\begin{equation*}
u^2|\D^\al v\D_\al v|+v^2|\D^\al u\D_\al u|\leq \widetilde{C}\delta\text{ in }\mathbf{A}_\delta.
\end{equation*}
For this we use the frame $L_1,L_2,L_3,L_4$ as  in the proof of Prooposition \ref{Car}. The bound follows from \eqref{lubounds}, \eqref{ao99}, and \eqref{Car52}.
 
We show now that 
\begin{equation}\label{va21}
|\D_\al\phi_{(j_1\ldots j_k)}\D^\al \widetilde{\eta}_{\delta,\eps}|\leq  \widetilde{C}_{\phi'}(\mathbf{1}_{\widetilde{\mathbf{B}}_\eps}+\mathbf{1}_{\mathbf{A}_\delta}),
\end{equation}
where the constant $\widetilde{C}_{\phi'}$ depends on the smooth functions ${\phi'}_{(j_1\ldots j_k)}$ defined in \eqref{va20}. Using the formula \eqref{va20} (which becomes $\phi_{(j_1\ldots j_k)}=uv\cdot {\phi'}_{(j_1\ldots j_k)}$ in $\mathbf{A}_\delta\cup\widetilde{\mathbf{B}}_\eps$), this follows easily from \eqref{va89}.

It follows from \eqref{va20}, \eqref{va23}, and \eqref{va21} that
\begin{equation*}
|\D_\al\phi_{(j_1\ldots j_k)}\D^\al \widetilde{\eta}_{\delta,\eps}|+|\phi_{j_1\ldots j_k}|( |\square_\g\widetilde{\eta}_{\delta,\eps}|+|D^1\widetilde{\eta}_{\delta,\eps}| )\leq\widetilde{C}_{\phi'}(\mathbf{1}_{\widetilde{\mathbf{B}}_\eps}+\mathbf{1}_{\mathbf{A}_\delta}). 
\end{equation*}
Since $\lim_{\delta\to 0}\|\mathbf{1}_{\mathbf{A}_\delta}\|_{L^2}=0$, we can let $\delta\to
0$ in \eqref{va12} to conclude that
\begin{equation}\label{va13}
\lambda\sum_{j_1,\ldots ,j_k}\|e^{-\lambda f_{\eps}}\cdot \mathbf{1}_{B_{\eps^{10}/2}(x_0)\cap\mathbf{E}^4}\cdot \phi_{(j_1\ldots j_k)} \|_{L^2}\leq \widetilde{C}_{\phi'}\sum_{j_1,\ldots ,j_k}\|e^{-\lambda f_{\eps}}\cdot \mathbf{1}_{\widetilde{\mathbf{B}}_\eps}\|_{L^2}
\end{equation}
for any $\lambda\geq \widetilde{C}_{\Aa,\Bb}$. Finally, using the definition \eqref{Bar2}, we observe that
\begin{equation*}
\inf_{B_{\ep^{40}}(x_0)\cap\mathbf{E}^4}e^{-\lambda f_{\eps}}\geq e^{-\lambda \ln(\eps+\eps^{32}/2)}\geq \sup_{\widetilde{\mathbf{B}}_\eps}e^{-\lambda f_{\eps}}.
\end{equation*}
It follows from \eqref{va13} that
\begin{equation*}
\lambda\sum_{j_1,\ldots ,j_k}\|\mathbf{1}_{B_{\ep^{40}}(x_0)\cap\mathbf{E}^4}\cdot \phi_{(j_1\ldots j_k)} \|_{L^2}\leq \widetilde{C}_{\phi'}\sum_{j_1,\ldots ,j_k}\|\mathbf{1}_{\widetilde{\mathbf{B}}_\eps}\|_{L^2}
\end{equation*}
for any $\lambda\geq \widetilde{C}_{\Aa,\Bb}$. We let now $\lambda\to\infty$. The identity \eqref{Bar3} follows.
\end{proof}

We show now that the tensor $\Ss$ vanishes in an open neighborhood of the horizon $\HH$ in $\mathbf{E}^4$. For any $R>r_+$ let
\begin{equation*}
\mathbf{E}^4_R=\{x\in\mathbf{E}^4:r(x)\in(r_+,R)\},
\end{equation*}
where $r:\mathbf{E}^4\to(r_+,\infty)$ is the smooth function used in Proposition \ref{Carl2}.

\begin{lemma}\label{Lemmaa2}
With the notation in Theorem \ref{Main1kerr}, there is $R>r_+$ such that
\begin{equation*}
\Ss\equiv 0\text{ in }\mathbf{E}^4_R.
\end{equation*}
\end{lemma}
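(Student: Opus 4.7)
The plan is to deduce Lemma \ref{Lemmaa2} from Lemma \ref{Lemmaa1} using only the transport equation supplied by the second identity of \eqref{va4}; Proposition \ref{Carl2} itself is not needed at this step and is presumably reserved for the subsequent, harder extension of the vanishing across the ergoregion. By Lemma \ref{Lemmaa1}, $\Ss\equiv 0$ on $\mathbf{O}'\cap\mathbf{E}^4$ for some open $\mathbf{O}'\subseteq\mathbf{M}^4$ with $S_0\subseteq\mathbf{O}'$. In the Kruskal-type chart of Lemma \ref{ao} the bifurcate sphere is $S_0=\{u=v=0\}$, so compactness of $S_0$ yields a uniform tubular neighborhood $\{u^2+v^2<\eta^2\}\subseteq\mathbf{O}'$ for some $\eta\in(0,1)$.

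The second identity of \eqref{va4}, unwound in local coordinates via the Lie-derivative formula for a covariant tensor, reads
\[
\xi^\beta\partial_\beta\Ss_{\al_1\ldots\al_k}=(\Cc\Ss)_{\al_1\ldots\al_k}-\sum_{i=1}^k\partial_{\al_i}(\xi^\beta)\,\Ss_{\al_1\ldots\al_{i-1}\beta\al_{i+1}\ldots\al_k}.
\]
Along any integral curve of $\xi$ in $\mathbf{E}^4$ this is a linear first-order ODE for the scalar components of $\Ss$ with smooth coefficients. Standard uniqueness for such ODEs therefore forces the set $\{\Ss=0\}\cap\mathbf{E}^4$ to be invariant under the (complete) flow of $\xi$.

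A direct computation in the Kruskal chart yields $\xi=(u/(2c_0))\partial_u-(v/(2c_0))\partial_v-(a/(2mr_+))\partial_{\phi_\ast}$; in particular, $\xi$-orbits preserve both $uv$ (and hence $r$) and $\theta$, and are parametrized by $t\mapsto(u_0e^{t/(2c_0)},\,v_0e^{-t/(2c_0)},\,\theta_0,\,\phi_{\ast,0}-at/(2mr_+))$. On such an orbit $u(t)^2+v(t)^2=u_0^2e^{t/c_0}+v_0^2e^{-t/c_0}$ attains its minimum value $2u_0v_0$, so the orbit meets $\{u^2+v^2<\eta^2\}\subset\mathbf{O}'$ precisely when $u_0v_0<\eta^2/2$; via $r_\ast=c_0\ln(uv)$ this is equivalent to $r<R$ for some $R>r_+$. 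Thus for every $x\in\mathbf{E}^4_R$ the $\xi$-orbit through $x$ crosses $\mathbf{O}'\cap\mathbf{E}^4$, where $\Ss$ already vanishes, and the $\xi$-invariance of $\{\Ss=0\}$ yields $\Ss(x)=0$; that is, $\Ss\equiv 0$ on $\mathbf{E}^4_R$.

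The only substantive observation behind this argument is that the second equation of \eqref{va4}, unlike a generic PDE, is actually a zero-order linear ODE along $\xi$-orbits once expressed in local coordinates; this is precisely the content of the displayed formula, which re-expresses $\xi(\Ss_{\al_1\ldots\al_k})$ as an algebraic linear combination of the other components of $\Ss$ at the same point with smooth coefficients. Everything else reduces to a compactness statement in Kruskal coordinates together with ODE uniqueness in one variable.
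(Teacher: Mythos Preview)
Your argument is correct and is essentially the paper's own proof: both use Lemma~\ref{Lemmaa1} to obtain vanishing on a neighborhood of $S_0$, then read the second identity in \eqref{va4} as a linear ODE along $\xi$-orbits and propagate the zero set by $\xi$-invariance, using that $r$ (equivalently $uv$) is constant along these orbits. The paper carries this out in Boyer--Lindquist coordinates, where $\xi=\partial_t$ is a coordinate field and the ODE is simply \eqref{tr90}, which is slightly cleaner than your Kruskal computation; note also that the Kruskal chart $\widetilde{\mathbf{O}}$ only covers $u,v\in(-c_1,1)$ and hence not all of $\mathbf{E}^4_R$, so your orbit parametrization is literally valid only on a segment---this is harmless since the ODE-uniqueness step is coordinate-independent and you only need each orbit to \emph{meet} $\mathbf{O}'$, but it is worth saying explicitly.
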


\begin{proof}[Proof of Lemma \ref{Lemmaa2}] It follows from Proposition \ref{ao} (a) and Lemma \ref{Lemmaa1} that there is $\eps_1>0$ such that
\begin{equation}\label{tr1}
\Ss\equiv 0\text{ in the set }\{x\in\mathbf{E}^4\cap\mathbf{O}: u(x)<\eps_1\text{ and }v(x)<\eps_1\}.
\end{equation}
It suffices to prove that $\Ss\equiv 0$ in $\mathbf{E}^4_R\cap\widetilde{\mathbf{E}}^4$, where $\widetilde{\mathbf{E}}^4$ is the dense open subset of $\mathbf{E}^4$ defined in section \ref{explicit}. In view of \eqref{tr1} and the definition of the functions $u,v$ in the proof of Lemma \ref{ao}, there is $\eps_2>0$ such that
\begin{equation}\label{tr2}
\Ss\equiv 0\text{ in the set }\{x=(r,t,\theta,\phi)\in\widetilde{\mathbf{E}}^4: t=0\text{ and }r<r_++\eps_2\}.
\end{equation}

We use the Boyer-Lindquist coordinate chart (see appendix \ref{explicit}) to define
\begin{equation*}
\widetilde{\partial}_1=\partial_r,\,\,\widetilde{\partial}_2=\partial_t,\,\,\widetilde{\partial}_3=\partial_\theta,\,\,\widetilde{\partial}_4=\partial_\phi
\end{equation*}
and
\begin{equation*}
\widetilde{\phi}_{(j_1\ldots j_k)}=\Ss(\widetilde{\partial}_{j_1},\ldots,\widetilde{\partial}_{j_k})
\end{equation*}
The second identity in \eqref{va4} gives, for any $(j_1,\ldots,j_k)\in\{1,2,3,4\}^k$
\begin{equation}\label{tr90}
\partial_t(\widetilde{\phi}_{(j_1\ldots j_k)})=\sum_{l_1,\ldots,l_k}\widetilde{\phi}_{(l_1\ldots l_k)}{\Cc^{l_1\ldots l_k}}_{j_1\ldots j_k}.
\end{equation}
In view of \eqref{tr2}
\begin{equation*}
\widetilde{\phi}_{(j_1\ldots j_k)}(r,0,\theta,\phi)=0\text{ if }r<r_++\eps_2.
\end{equation*}
Since $\Cc$ is a smooth tensor in $\widetilde{\mathbf{E}}^4$, it follows that $\widetilde{\phi}_{(j_1\ldots j_k)}(r,t,\theta,\phi)=0$ if $r<r_++\eps_2$, which completes the proof of the lemma. 
\end{proof}

We prove now that $\Ss\equiv 0$ in $\mathbf{E}^4$, which completes the proof of the theorem. In view of Lemma \ref{Lemmaa2}, it suffices to prove the following:

\begin{lemma}\label{Lemmaa3}
With the notation in Theorem \ref{Main1}, assume that
\begin{equation}\label{tr70}
\Ss\equiv 0\text{ in }\mathbf{E}^4_{R_0}.
\end{equation}
for some $R_0>r_+$. Then there is $R_1>R_0$ such that
\begin{equation*}
\Ss\equiv 0\text{ in }\mathbf{E}^4_{R_1}.
\end{equation*}
\end{lemma}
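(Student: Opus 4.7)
The plan is to apply Proposition \ref{Carl2} at each point $x_0\in\mathbf{E}^4$ with $r(x_0)=R_0$, following the structure of the proof of Lemma \ref{Lemmaa1} but with two simplifications coming from the hypothesis \eqref{tr70}. Since $\Ss$ already vanishes throughout $\mathbf{E}^4_{R_0}$, no cutoff in the $r$-direction is required; only the spatial cutoff $\chi_\eps=1-\eta(N^{x_0}/\eps^{20})$ is needed to ensure compact support in $B_{\eps^{10}}(x_0)$. Fixing such an $x_0$ together with a coordinate chart $\Phi^{x_0}:B_1\to B_1(x_0)$, define the components $\phi_{(j_1\ldots j_k)}=\Ss(\partial_{j_1},\ldots,\partial_{j_k})$ and set $\phi^\eps_{(j_1\ldots j_k)}=\phi_{(j_1\ldots j_k)}\cdot\chi_\eps\in C^\infty_0(B_{\eps^{10}}(x_0))$, to which Proposition \ref{Carl2} applies.

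Expanding $\square_\g\phi^\eps_{(j_1\ldots j_k)}$ and $\xi(\phi^\eps_{(j_1\ldots j_k)})$ by the product rule produces \emph{interior} terms (proportional to $\chi_\eps$) and \emph{boundary} terms (derivatives of $\chi_\eps$, supported in $\widetilde{\mathbf{B}}_\eps=\{N^{x_0}\in(\eps^{20}/2,\eps^{20})\}$). The interior terms are controlled pointwise using the identities in \eqref{va4}: the first gives $|\square_\g\phi_{(j_1\ldots j_k)}|\leq\widetilde{C}_{\Aa,\Bb}\sum_{l_1,\ldots,l_k}(|\phi_{(l_1\ldots l_k)}|+|D\phi_{(l_1\ldots l_k)}|)$, and the second, after writing $\xi(\phi_{(j_1\ldots j_k)})$ as $(\mathcal L_\xi\Ss)_{j_1\ldots j_k}$ minus the smooth linear combination of components arising from the ``Lie derivative versus directional derivative'' correction, gives $|\xi(\phi_{(j_1\ldots j_k)})|\leq\widetilde{C}_\Cc\sum_{l_1,\ldots,l_k}|\phi_{(l_1\ldots l_k)}|$. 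Summing over multi-indices and taking $\lambda$ larger than a fixed multiple of $\eps^{-6}$, both interior contributions are absorbed into the left-hand side of \eqref{Carb1}.

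The right-hand side is then supported in $\widetilde{\mathbf{B}}_\eps$, and by \eqref{tr70} the contribution from $\widetilde{\mathbf{B}}_\eps\cap\{r<R_0\}$ vanishes identically. The decisive comparison is between the weight $\f_\eps=\ln[r-R_0+\eps+\eps^{12}N^{x_0}]$ on the good region $B_{\eps^{40}}(x_0)\cap\{r\geq R_0\}$ and on the effective boundary $\widetilde{\mathbf{B}}_\eps\cap\{r\geq R_0\}$:
\[
\sup_{B_{\eps^{40}}(x_0)\cap\{r\geq R_0\}}\f_\eps\leq \ln\eps+\widetilde{C}\eps^{39},\qquad \inf_{\widetilde{\mathbf{B}}_\eps\cap\{r\geq R_0\}}\f_\eps\geq \ln\eps+\eps^{31}/4,
\]
so the gap is at least $\eps^{31}/8$. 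Stripping the weight from both sides yields
\[
\lambda\sum_{j_1,\ldots,j_k}\|\mathbf{1}_{B_{\eps^{40}}(x_0)\cap\{r\geq R_0\}}\cdot \phi_{(j_1\ldots j_k)}\|_{L^2}\leq \widetilde{C}_{\phi,\eps}\,e^{-\lambda\eps^{31}/8},
\]
and letting $\lambda\to\infty$ forces $\Ss\equiv 0$ in $B_{\eps^{40}}(x_0)\cap\mathbf{E}^4\cap\{r\geq R_0\}$, hence in $B_{\eps^{40}}(x_0)\cap\mathbf{E}^4$.

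To upgrade these pointwise statements into vanishing on a full $\mathbf{E}^4_{R_1}$ with $R_1>R_0$, I use that in Boyer--Lindquist coordinates the Kerr metric components depend only on $(r,\theta)$; consequently the parameters $\eps$ and $\widetilde{C}_\eps$ in Proposition \ref{Carl2} may be chosen uniformly in $t(x_0)$ and $\phi(x_0)$. By compactness of the $\theta$-interval $[0,\pi]$, supplementing the Boyer--Lindquist chart with auxiliary smooth charts near the polar axis $\theta\in\{0,\pi\}$ where it degenerates, one obtains a single $\eps_\ast>0$ valid for every $x_0\in\{r=R_0\}\cap\mathbf{E}^4$; the corresponding neighborhoods cover a uniform tubular neighborhood of $\{r=R_0\}$, giving $R_1=R_0+c\eps_\ast^{40}$. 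The main technical burden, as in the proof of Lemma \ref{Lemmaa1}, is the bookkeeping for the derivatives of $\chi_\eps$ (analogous to \eqref{va23}) and the uniformity across the $\theta$-range near the axis of rotation; both are routine given the explicit formulas of Section \ref{CarEst2}.
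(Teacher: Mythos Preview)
Your Carleman argument---the single cutoff $\chi_\eps=1-\eta(N^{x_0}/\eps^{20})$, the pointwise bounds on $\square_\g\phi_{(j_1\ldots j_k)}$ and $\xi(\phi_{(j_1\ldots j_k)})$ from \eqref{va4}, absorption for large $\lambda$, and the weight comparison between $B_{\eps^{40}}(x_0)$ and $\widetilde{\mathbf{B}}_\eps\cap\{r\geq R_0\}$---matches the paper's proof essentially line for line. The one genuine difference is the globalization step. You pass from the local conclusion to vanishing on a full $\mathbf{E}^4_{R_1}$ by arguing that the Carleman parameter $\eps$ from Proposition~\ref{Carl2} can be taken uniformly over the (non-compact) level set $\{r=R_0\}$, using that the Kerr metric depends only on $(r,\theta)$ together with compactness of $\theta\in[0,\pi]$. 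The paper instead restricts to the compact slice $\{t=0,\,r=R_0\}$, extracts a finite subcover to obtain $\Ss\equiv 0$ on $\{t=0,\,r<R_0+\eps_3\}$, and then propagates to all $t$ by solving the linear ODE system $\partial_t\widetilde{\phi}_{(j_1\ldots j_k)}=\sum\widetilde{\phi}_{(l_1\ldots l_k)}{\Cc^{l_1\ldots l_k}}_{j_1\ldots j_k}$ coming from the second identity in \eqref{va4}. Your route avoids a second appeal to the Lie-derivative equation but relies on choosing the charts $\Phi^{x_0}$ compatibly with the Killing isometries so that the constants really transfer; the paper's route is more self-contained in that it uses only compactness of an honestly compact set and ODE uniqueness, with no uniformity claim on $\eps$ along $\{r=R_0\}$.
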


\begin{proof}[Proof of Lemma \ref{Lemmaa3}] Assume that $x_0\in\mathbf{E}^4$ and $r(x_0)=R_0$. We show first that 
\begin{equation}\label{tr10}
\text{ there is a neighborhood }\mathbf{O}'_{x_0}\text{ of }x_0\text{ such that }\Ss\equiv 0\text{ in }\mathbf{O}'_{x_0}.
\end{equation}
This is similar to the proof of Lemma \ref{Lemmaa1}, using the Carleman estimate in Proposition \ref{Carl2} instead of the Carleman estimate in Proposition \ref{Car}. Assume $\Phi^{x_0}:B_1\to\mathbf{E}^4$, $\Phi^{x_0}(0)=x_0$, is a smooth coordinate chart around $x_0$. With the notation in Proposition \ref{Carl2}, there is $\eps\in (0,1/2]$ sufficiently small and $\widetilde{C}$ sufficiently large such that 
\begin{equation}\label{tr11}
\lambda \|e^{-\lambda \f_\eps}\phi\|_{L^2}+\|e^{-\lambda \f_\eps}|D^1\phi\,|\,\|_{L^2}\leq \widetilde{C}\lambda^{-1/2}\|e^{-\lambda \f_\eps}\,\square_{\g}\phi\|_{L^2}+\eps^{-6}\|e^{-\lambda \f_\eps}\xi(\phi)\,\|_{L^2},
\end{equation}
for any $\lambda\geq\widetilde{C}$ and any $\phi\in C^\infty_0(B_{\eps^{10}}(x_0))$, where
\begin{equation}\label{tr12}
\f_\eps=\ln[r-R_0+\eps+\eps^{12}N^{x_0}].
\end{equation}
The constant $\eps$ will remain fixed in this proof, and sufficiently small in the sense of Proposition \ref{Carl2}. We will show that
\begin{equation}\label{Tar3}
\Ss\equiv 0\text{ in }B_{\eps^{40}}(x_0).
\end{equation}

For $(j_1,\ldots,j_k)\in\{1,2,3,4\}^k$ we define, using the coordinate chart $\Phi$,
\begin{equation*}
\phi_{(j_1\ldots j_k)}=\mathcal{S}(\partial_{j_1},\ldots ,\partial_{j_k}).
\end{equation*}
If $k=0$ we simply define $\phi=\Ss$ in $B_1(x_0)$. The functions $\phi_{(j_1\ldots j_k)}:B_1(x_0)\to\mathbb{C}$ are smooth. Let $\eta:\mathbb{R}\to[0,1]$ denote a smooth function supported
in $[1/2,\infty)$ and equal to $1$ in $[3/4,\infty)$, as before. We define
\begin{equation*}
\begin{split}
\phi^{\eps}_{(j_1\ldots j_k)}=\phi_{(j_1\ldots j_k)}\cdot \big(1-\eta(N^{x_0}/ \eps^{20})\big)=\phi_{(j_1\ldots j_k)}\cdot \widetilde{\eta}_\eps.
\end{split}
\end{equation*}
Clearly, $\phi^{\eps}_{(j_1\ldots j_k)}\in C^\infty _0(B_{\eps^{10}}(x_0))$ and
\begin{equation*}
\begin{cases}
&\square_\g\phi^{\eps}_{(j_1\ldots j_k)}=\widetilde{\eta}_{\eps}\cdot\square_\g\phi_{(j_1\ldots j_k)}+2\D_\al\phi_{(j_1\ldots j_k)}\cdot \D^\al \widetilde{\eta}_{\eps}+\phi_{(j_1\ldots j_k)}\cdot \square_\g\widetilde{\eta}_{\eps}\\
&\xi(\phi^{\eps}_{(j_1\ldots j_k)})=\widetilde{\eta}_{\eps}\cdot \xi(\phi_{(j_1\ldots j_k)})+\phi_{(j_1\ldots j_k)}\cdot \xi(\widetilde{\eta}_{\eps}).
\end{cases}
\end{equation*}
Using the Carleman inequality \eqref{tr11}, for any $(j_1,\ldots j_k)\in\{1,2,3,4\}^k$ we have
\begin{equation}\label{tr30}
\begin{split}
&\lambda\cdot \|e^{-\lambda \f_{\eps}}\cdot \widetilde{\eta}_{\eps}\phi_{(j_1\ldots j_k)} \|_{L^2}+\|e^{-\lambda \f_{\eps}}\cdot \widetilde{\eta}_{\eps}|D^1\phi_{(j_1\ldots j_k)}|\, \|_{L^2}\\
&\leq \widetilde{C}\lambda ^{-1/2}\cdot \|e^{-\lambda \f_{\eps}}\cdot \widetilde{\eta}_{\eps}\square_\g\phi_{(j_1\ldots j_k)}\|_{L^2}+\widetilde{C}\|e^{-\lambda \f_\eps}\cdot \widetilde{\eta}_\eps\xi(\phi_{(j_1\ldots j_k)})\|_{L^2}\\
&+\widetilde{C}\Big[\|e^{-\lambda \f_{\eps}}\cdot \D_\al\phi_{(j_1\ldots j_k)}\D^\al \widetilde{\eta}_{\eps} \|_{L^2}+\|e^{-\lambda \f_{\eps}}\cdot \phi_{(j_1\ldots j_k)} ( |\square_\g\widetilde{\eta}_{\eps}|+|D^1\widetilde{\eta}_{\eps}| )\|_{L^2}\Big],
\end{split}
\end{equation}
for any $\lambda\geq\widetilde{C}$. Using the identities in \eqref{va4}, in $B_{\eps^{10}}(x_0)$ we estimate pointwise
\begin{equation}\label{tr31}
\begin{cases}
&|\square_\g\phi_{(j_1\ldots j_k)}|\leq \widetilde{C}_{\Aa,\Bb,\Cc}\sum_{l_1,\ldots,l_k}\big(|D^1\phi_{(l_1\ldots l_k)}|+|\phi_{(l_1\ldots l_k)}|\big);\\
&|\xi(\phi_{(j_1\ldots j_k)})|\leq \widetilde{C}_{\Aa,\Bb,\Cc}\sum_{l_1,\ldots,l_k}|\phi_{(l_1\ldots l_k)}|,
\end{cases}
\end{equation}
for some constant $\widetilde{C}_{\Aa,\Bb,\Cc}$ that depends only on the constants $\widetilde{C}$ and the tensors $\Aa,\Bb,\Cc$. We add up the inequalities \eqref{tr30} over $(j_1,\ldots,j_k)\in\{1,2,3,4\}^k$. The key observation is that, in view of \eqref{tr31}, the first two terms in the right-hand side can be absorbed into the left-hand side for $\lambda$ sufficiently large. Thus, for any $\lambda\geq \widetilde{C}_{\Aa,\Bb,\Cc}$
\begin{equation}\label{tr32}
\begin{split}
\lambda&\sum_{j_1,\ldots ,j_k}\|e^{-\lambda \f_{\eps}}\cdot \widetilde{\eta}_{\eps}\phi_{(j_1\ldots j_k)} \|_{L^2}\\
&\leq \widetilde{C}\sum_{j_1,\ldots ,j_k}\Big[\|e^{-\lambda \f_{\eps}}\cdot \D_\al\phi_{(j_1\ldots j_k)}\D^\al \widetilde{\eta}_{\eps} \|_{L^2}+\|e^{-\lambda \f_{\eps}}\cdot \phi_{(j_1\ldots j_k)} ( |\square_\g\widetilde{\eta}_{\eps}|+|D^1\widetilde{\eta}_{\eps}| )\|_{L^2}\Big].
\end{split}
\end{equation}

Using the hypothesis \eqref{tr70} and the definition of the function $\widetilde{\eta}_\eps$, we have
\begin{equation*}
|\D_\al\phi_{(j_1\ldots j_k)}\D^\al \widetilde{\eta}_{\eps}|+\phi_{(j_1\ldots j_k)} ( |\square_\g\widetilde{\eta}_{\eps}|+|D^1\widetilde{\eta}_{\eps}| )\leq\widetilde{C}_{\phi}\cdot \mathbf{1}_{\{x\in B_{\eps^{10}}(x_0):\,r\geq R_0\text{ and }N^{x_0}>\eps^{20}/2\}},
\end{equation*}
for some $\widetilde{C}_{\phi}$ that depends on the smooth functions $\phi_{j_1\ldots j_k}$. Using the definition \eqref{tr12}, we observe also that
\begin{equation*}
\inf_{B_{\eps^{40}}(x_0)}e^{-\lambda \f_\eps}\geq e^{-\lambda \ln(\eps+\eps^{32}/2)}\geq\sup_{\{x\in B_{\eps^{10}}(x_0):\,r\geq R_0\text{ and }N^{x_0}>\eps^{20}/2\}}e^{-\lambda \f_\eps}.  
\end{equation*}
The identity \eqref{Tar3} follows by letting $\lambda\to\infty$ in \eqref{tr32}.

The set
\begin{equation*}
\{x\in\mathbf{E}^4:t(x)=0\text{ and }r(x)=R_0\}
\end{equation*}
is compact, where $t:\mathbf{E}^4\to\mathbb{R}$ is a smooth function which agrees with coordinate function $t$ in the Boyer-Lindquist coordinates. It follows from \eqref{tr10} that there is $\eps_3>0$ such that
\begin{equation}\label{tr80}
\Ss\equiv 0\text{ in the set }\{x\in\mathbf{E}^4:t(x)=0\text{ and }r(x)<R_0+\eps_3\}.
\end{equation}
We define the vectors $\widetilde{\partial}_1=\partial_r,\widetilde{\partial}_2=\partial_t,\widetilde{\partial}_3=\partial_\theta,\widetilde{\partial}_4=\partial_\phi\in\mathbb{T}(\widetilde{\mathbf{E}}^4)$ and the functions $\widetilde{\phi}_{(j_1\ldots j_k)}=\Ss(\widetilde{\partial}_{j_1},\ldots,\widetilde{\partial}_{j_k})$ as in the proof of Lemma \ref{Lemmaa2}. It follows from the identity \eqref{tr90} and \eqref{tr80} that
\begin{equation*}
\widetilde{\phi}_{(j_1\ldots j_k)}(r,t,\theta,\phi)=0\text{ if }r<R_0+\eps_3,
\end{equation*} 
which completes the proof of the lemma.
\end{proof}

\section{Proof of Theorem \ref{Main1}}\label{MinProof}

In this section we prove Theorem \ref{Main1}. We define the smooth optical functions $u,v:\mathcal{E}'\to(-1/2,\infty)$,
\begin{equation}\label{la15}
\begin{cases}
&u(t,x)=|x|-1-t;\\
&v(t,x)=|x|-1+t,
\end{cases}
\end{equation}
where $\mathcal{E}'=\{(t,x)\in\mathcal{M}:|x|>|t|+1/2\}$. Notice that $\mathcal{E}=\{(t,x)\in\mathcal{E}':u>0\text{ and }v>0\}$. For $R\in[1,\infty)$ we define the relatively compact open set
\begin{equation}\label{ca80}
\mathcal{E}_{R}=\{(t,x)\in\mathcal{E}:(u+1/2)(v+1/2)<R\}.
\end{equation}

\begin{proposition}\label{Carmin1}
Assume $R\geq 1$. Then there is $\lambda (R)\gg 1$ such that for any $\phi\in C^2_0(\mathcal{E}_{R})$ and $\lambda\geq\lambda(R)$
\begin{equation}\label{ca1}
\lambda \cdot \|e^{-\lambda f}\cdot \phi\|_{L^2}+
\|e^{-\lambda f}\cdot D \phi\|_{L^2}\leq C_R\lambda^{-1/2}\cdot \|e^{-\lambda f}\cdot\square \phi\|_{L^2},
\end{equation}
where
\begin{equation}\label{ca6}
f=\log(u+1/2)+\log(v+1/2)=\log\big[( |x|-1/2)^2-t^2\big].
\end{equation}
and
 $
|D \phi |=\big( \sum_{\mu=0}^d|\partial_\mu \phi |^2\big)^{1/2}
$.
\end{proposition}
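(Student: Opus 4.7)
The plan is to use the standard conjugation–commutator technique for Carleman estimates with second-order hyperbolic operators. Setting $\psi = e^{-\lambda f}\phi \in C^2_0(\mathcal{E}_R)$, the inequality \eqref{ca1} is equivalent to a coercive lower bound for the conjugated operator
\[
L_\lambda \psi := e^{-\lambda f}\square(e^{\lambda f}\psi) = \square\psi + 2\lambda\,\partial^\mu f\,\partial_\mu\psi + \bigl(\lambda\,\square f + \lambda^2\,\partial^\mu f\,\partial_\mu f\bigr)\psi,
\]
with signature $(-,+,\dots,+)$. I split $L_\lambda = S_\lambda + A_\lambda$ into its formal self-adjoint and skew-adjoint parts on $L^2(\mathcal{M})$,
\[
S_\lambda = \square + \lambda^2\,\partial^\mu f\,\partial_\mu f,\qquad A_\lambda = 2\lambda\,\partial^\mu f\,\partial_\mu + \lambda\,\square f,
\]
so that polarization and integration by parts give, for any compactly supported $\psi$,
\[
\|L_\lambda \psi\|_{L^2}^2 = \|S_\lambda \psi\|_{L^2}^2 + \|A_\lambda \psi\|_{L^2}^2 + \langle [S_\lambda,A_\lambda]\psi,\psi\rangle.
\]
The heart of the argument is a positive lower bound for the commutator term.

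Next I compute $[S_\lambda,A_\lambda]$ and, after further integration by parts, convert it into a quadratic form in $\psi$ and $D\psi$ whose leading contributions are $4\lambda\int \partial^\mu\partial^\nu f\,\partial_\mu\psi\,\partial_\nu\psi$ and $4\lambda^3\int \partial^\mu\partial^\nu f\,\partial_\mu f\,\partial_\nu f\,|\psi|^2$, with error terms of order $\lambda^2$ on $|\psi|^2$ and of order $1$ on $|D\psi|^2$. Extracting positivity from both leading contributions amounts to a quantitative pseudo-convexity statement of the H\"ormander type: at every point of $\mathcal{E}_R$, for every tangent vector $X$,
\[
-\partial_\mu\partial_\nu f\,X^\mu X^\nu + \mu_0\,\mathbf{m}_{\mu\nu}X^\mu X^\nu + K|X^\mu\partial_\mu f|^2 \geq c_R\sum_{\alpha=0}^{d}(X^\alpha)^2,
\]
for suitable $\mu_0\in\mathbb{R}$, $K>0$, and $-\partial^\mu f\,\partial^\nu f\,\partial_\mu\partial_\nu f \geq c_R$, with constants uniform on $\mathcal{E}_R$. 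For $f = \log(u+1/2) + \log(v+1/2)$ this is an explicit computation using the Minkowski null identities
\[
\partial^\mu u\,\partial_\mu u = 0,\quad \partial^\mu v\,\partial_\mu v = 0,\quad \partial^\mu u\,\partial_\mu v = 2,\quad \square u = \square v = (d-1)/|x|,
\]
combined with the expansion of $\partial_\mu\partial_\nu f$ in the null frame associated with $u$ and $v$ together with orthonormal angular vectors.

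Combining the commutator lower bound with the basic identity, discarding the nonnegative squares $\|S_\lambda\psi\|^2+\|A_\lambda\psi\|^2$, and absorbing the $\lambda^2\|\psi\|^2$ and $\|D\psi\|^2$ error terms into the dominant $\lambda^3\|\psi\|^2$ and $\lambda\|D\psi\|^2$ for $\lambda\geq\lambda(R)$ large, one obtains $\lambda^3\|\psi\|_{L^2}^2+\lambda\|D\psi\|_{L^2}^2 \leq C_R\|L_\lambda\psi\|_{L^2}^2$. Translating back via $\phi = e^{\lambda f}\psi$ and $e^{-\lambda f}D\phi = D\psi + \lambda\psi\,Df$, using that $|Df|\leq C_R$ on the support (which lies inside $\mathcal{E}_R$ where $u+1/2$ and $v+1/2$ are bounded above and below), yields \eqref{ca1}. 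The main obstacle is the pseudo-convexity verification itself — keeping track of the error terms precisely enough that they can genuinely be absorbed, and ensuring the quadratic form bounds hold with constants uniform on all of $\mathcal{E}_R$. The logarithmic choice $f=\log h$ is essential: differentiating $\log(u+1/2)$ and $\log(v+1/2)$ produces the negative factors of $(u+1/2)^{-1}$ and $(v+1/2)^{-1}$ in the Hessian that generate the decisive $\lambda^3$ coefficient and compensate for the loss as one approaches the characteristic boundary $\mathcal{H}$.
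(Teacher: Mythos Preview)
Your strategy---conjugate, split into symmetric and antisymmetric parts, extract positivity from the commutator, and reduce to a pointwise pseudo-convexity condition---is exactly the paper's strategy, just packaged slightly differently. The paper pairs $L\psi$ with a multiplier $2\lambda(W-w)\psi$ (with $W=\D^\alpha f\,\partial_\alpha$ and $w$ a function to be chosen) rather than computing the commutator $[S_\lambda,A_\lambda]$; after integration by parts both routes land on the same two pointwise inequalities, your Hessian condition on $X$ and your scalar condition on $-\D^\alpha f\,\D^\beta f\,\D_\alpha\D_\beta f$.

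There is, however, a real gap in your outline. You write the Hessian condition with a parameter $\mu_0\in\mathbb{R}$ and then propose to ``discard the nonnegative squares $\|S_\lambda\psi\|^2+\|A_\lambda\psi\|^2$''. Neither of these works as stated. For a purely angular vector $X$ (spatial, orthogonal to $x$) one has $X^\mu\partial_\mu f=0$ and $-\partial_\mu\partial_\nu f\,X^\mu X^\nu=-h^{-1}(2-|x|^{-1})|X|^2<0$, so the bare commutator is negative there and you genuinely need the $\mu_0$ term; but a \emph{constant} $\mu_0$ cannot satisfy both your pointwise conditions on all of $\mathcal{E}_R$, since they force roughly $2-|x|^{-1}<\mu_0\,h<4$ while $h$ ranges over $(1/4,R)$. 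The paper resolves this by taking the multiplier to be a \emph{function}, specifically $w'=h^{-1}(2-A_0|x|^{-1})$ with $A_0$ just below $1$; this is precisely the freedom that pairing with $2\lambda(W-w)\psi$ (rather than with $A_\lambda\psi$) buys you, and it does not come for free from the pure $[S_\lambda,A_\lambda]$ identity. Finally, even with the right $w'$, the quadratic form in $(Y^0,Z)$ is not manifestly positive (the $Z^2$ coefficient is negative); the paper closes this with a separate lemma exploiting the large-$\lambda$ square $\lambda\bigl((2|x|-1)Z-2tY^0\bigr)^2$. That step is the heart of the verification, and your phrase ``this is an explicit computation'' hides exactly the place where the argument is delicate.
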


The Carleman inequality in Proposition \ref{Carmin1} suffices to prove Theorem \ref{Main1}, by an argument similar to the one given in Lemma \ref{Lemmaa1} (which exploits implicitly the bifurcate characteristic geometry of $\mathcal{H}$, using a cutoff function of the form $\eta(uv/\delta)$, to compensate for the fact that we do not assume vanishing of the derivatives of $\phi$ on $\mathcal{H}$). Proposition \ref{Carmin1} can  be obtained as a direct consequence of H\"{o}rmander's general pseudo-convexity condition \eqref{HoCond}. For the convenience of the reader, we provide below a self-contained elementary proof of Proposition \ref{Carmin1}, in which we verify implicitly a similar pseudo-convexity condition in our simple case and show how it implies the Carleman inequality. 

\begin{proof}[Proof of Proposition \ref{Carmin1}] The constants $C\geq 1$ in this proof may depend on $R$ and $d$. We may assume that $\phi\in C^\infty_0(\mathcal{E}_{R})$ is real-valued. Since   all  partial derivatives  of $f$ are  bounded  in $\mathcal{E}_{R}$, for \eqref{ca1} it suffices to prove that, for $\lambda\geq\lambda(R)$,
\begin{equation}\label{car1}
\lambda \cdot \|e^{-\lambda f}\cdot \phi\|_{L^2}+\| D(e^{-\lambda  f}\cdot \phi\big)\|_{L^2}\leq C\lambda ^{-1/2}\cdot \|e^{-\lambda  f}\cdot\square \phi\|_{L^2}.
\end{equation}
To prove  estimate \eqref{car1} we start by setting,
\bea
\phi=e^{\la f}\psi
\eea
with $f=f(u,v)$  as above. 
Observe that,
\beaa
e^{-\la f}\square(e^{\la f}\psi)=\square\psi+\la(2 \D^\b f \D_\b \psi+\square f\psi)+\la^2( \D^\be f \D_\b f)\psi.
\eeaa
Thus estimate  \eqref{car1}
follows from,
\beaa
\la\|\psi\|_{L^2}+ C^{-1}\|D\psi\|_{L^2}\le C \la^{-1/2}
 \|L\psi+\la (\square f)\psi \|_{L^2},
\eeaa
where, 
\begin{equation*}
\begin{split}
&L\psi=\square\psi+ 2\la W\psi +\la^2 G\psi,\\
&W=\D^\a f\D_\a,\qquad G=\D^\be f \D_\b f.
\end{split}
\end{equation*}
Since $\square f$ 
is bounded on $\EE_R$,  i.e. $|\square  f|\le C$, it suffices in fact to
 show that,
 \bea
\la\|\psi\|_{L^2}+ C^{-1} \|D \psi\|_{L^2}\le C \la^{-1/2}.
 \|L\psi \|_{L^2}\label{car2}
\eea
We shall establish in fact a lower bound for 
an integral of the  form,
\bea
E=<L\psi, 2\la( W-w)\psi>=2\la \int_{\EE_R}L\psi \big(W(\psi)-w\psi\big) 
\eea
where $w$ is a smooth function on $\EE_R$ we will choose below. In fact
we will choose $w$ such that we can establish the lower bound,
\bea
E\ge C^{-1}\big(\la\| D\psi\|_{L^2}^2+\la^3\|\psi\|_{L^2}^2\big)+\la^2 \|(W-w)\psi  \|_{L^2}^2   \label{car3}
\eea
Since $E \le  \|L\psi\|^2_{L^2}+\lambda^2\|(W-w)\psi\|^2_{L^2}$ \eqref{car2} easily follows from \eqref{car3}.

Now, writing $L\psi=\square\psi+\lambda^2G\psi+\lambda(W\psi+w\psi)+\lambda(W\psi-w\psi)$,
\begin{equation}\label{car5}
\begin{split}
&E=2\la<L\psi, (W-w)\psi>=2\la^2\|(W-w)\psi\|_{L^2}^2
+2\lambda^2\|W\psi\|_{L^2}^2-2\lambda^2\|w\psi\|_{L^2}^2+E_1+E_2\\
&E_1=\la<\square\psi, (2W-2w)\psi>\\
&E_2=\lambda^3<G\psi,     (2W-2w)\psi>.
\end{split}
\end{equation}
Thus,  for bounded $w$   and   for $\lambda$ sufficiently large,  \eqref{car3} is an immediate consequence
of 
\bea
2\lambda^2\|W\psi\|_{L^2}^2+E_1+E_2 \ge  C^{-1}\big(\la\| D\psi\|_{L^2}^2+\la^3\|\psi\|_{L^2}^2\big),\label{carr4}
\eea
 To evaluate $E_1$ and $E_2$ we  make use of the following simple lemma.

\begin{lemma}
 Let $Q_{\a\be}=\D_\a\psi \D_\b\psi-\frac{1}{2}m_{\a\b}( \D^\mu\psi  \D_\mu \psi)$ denote the enery-momentum tensor of the   wave operator 
    $\square =m^{\a\b}  \D_\a  \D_\b$. Then,    
\beaa
\square\psi\cdot(2W\psi-2 w\psi)&=&\D^\al(2W^\be Q_{\al\be}-2w\psi\cdot D_\al\psi+ D_\al w\cdot \psi^2)\\
&-&Q^{\a\b} (\D_\al W_\be+ \D_\b W_\a)+2w \D^\al\psi\cdot  \D_\al\psi-\square_{\g}w\cdot \psi^2,
\eeaa
and
\beaa
G\psi\cdot(2W\psi-2 w\psi)&=&\D^\al(\psi^2G\cdot W_\al)-\psi^2(2wG+W(G)+G\cdot \D^\al W_\al).
\eeaa
\end{lemma}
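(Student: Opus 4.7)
My plan is to prove both identities by a direct pointwise computation, using only the Leibniz rule (which automatically produces the exact divergence terms $\D^\al(\ldots)$ on the right-hand sides) together with the standard divergence property $\D^\al Q_{\al\be}=\square\psi\cdot\D_\be\psi$ of the energy-momentum tensor and the symmetry $Q_{\al\be}=Q_{\be\al}$. No analytic subtlety enters; the content is purely algebraic.

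For the first identity I would split the left-hand side as
\[
\square\psi\cdot 2W\psi \;-\; 2w\psi\cdot\square\psi
\]
and treat each piece separately. For the $W$-piece, I would start from $\D^\al(2W^\be Q_{\al\be})=2W^\be\D^\al Q_{\al\be}+2Q_{\al\be}\D^\al W^\be$, substitute $\D^\al Q_{\al\be}=\square\psi\cdot\D_\be\psi$, and symmetrize the deformation tensor using $Q^{\al\be}=Q^{\be\al}$ to get $2Q_{\al\be}\D^\al W^\be=Q^{\al\be}(\D_\al W_\be+\D_\be W_\al)$. Rearranging expresses $\square\psi\cdot 2W\psi$ as a divergence minus the deformation term. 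For the $w$-piece, integration by parts gives
\[
-2w\psi\,\square\psi=-\D^\al(2w\psi\,\D_\al\psi)+2w\,\D^\al\psi\,\D_\al\psi+2(\D^\al w)\psi\,\D_\al\psi,
\]
after which the identity $(\D^\al w)\D_\al(\psi^2)=\D^\al((\D_\al w)\psi^2)-(\square w)\psi^2$ converts the last summand into a divergence plus the $-(\square w)\psi^2$ term. Summing the two computations yields exactly the asserted formula.

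For the second identity, the key observation is $2\psi W(\psi)=W(\psi^2)$, so
\[
2G\psi W(\psi)=G W^\al\D_\al(\psi^2)=\D^\al(\psi^2 GW_\al)-\psi^2\bigl(W(G)+G\,\D^\al W_\al\bigr),
\]
and subtracting $2wG\psi^2$ from both sides gives the claim. The only conceivable obstacle is bookkeeping: ensuring every sign, every factor of two, and the symmetrization $\tfrac12(\D_\al W_\be+\D_\be W_\al)$ come out correctly. There is no geometric or analytic difficulty beyond this, since all quantities in the lemma are smooth and compactly supported.
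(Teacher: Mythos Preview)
Your proposal is correct; the computation goes through exactly as you outline, and every sign and factor checks. The paper does not give an explicit proof of this lemma---it simply calls it a ``simple lemma'' and uses the result---so your direct pointwise computation via the Leibniz rule and the divergence identity $\D^\al Q_{\al\be}=\square\psi\cdot\D_\be\psi$ is precisely the intended verification.
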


Since $\psi\in C^\infty_0(\EE_R)$ we integrate by parts to conclude that
\begin{equation}
\begin{split}
E_1+E_2&=\lambda\int_{\EE_R}2w\D^\al\psi\cdot \D_\al\psi-2\D^\al W^\be\cdot Q_{\al\be}\\
&+\lambda ^3\int_{\EE_R}\psi^2(-2wG-W(G)-G\cdot \D^\al W_\al)\\
 &-\lambda \int_{\EE_R}\psi^2\square_\g w.
\end{split}
\label{carr44}
\end{equation}
To prove  \eqref{carr4}    we are reduced to prove    pointwise bounds  for the first two integrands 
  in  \eqref{carr44}.  More precisely,
  dividing by $\lambda$ and $\lambda^3$ respectively,  it suffices to prove that the pointwise bounds
\begin{equation}\label{Car20}
C^{-1}|D\psi|^2\leq \lambda|W(\psi)|^2+(w\D^\al\psi\cdot \D_\al\psi-\D^\al W^\be\cdot Q_{\al\be}),
\end{equation}
and
\begin{equation}\label{Car21}
C^{-1}\leq-2wG-W(G)-G\cdot \D^\al W_\al,
\end{equation}
hold on $\EE_R$, for $\lambda$ sufficiently large. 

Recall that $W^\al=\D^\al f$ and $G=\D_\al f\D^\al f$. Observe that
\begin{equation*}
w\D^\al\psi\cdot \D_\al\psi-\D^\al W^\be\cdot Q_{\al\be}=(\D^\al\psi\cdot \D^\be\psi)[(w+\square f/2)m_{\al\be}-\D_\al\D_\be f]
\end{equation*}
and
\begin{equation*}
-2wG-W(G)-G\cdot \D^\al W_\al=-G(2w+\square f)-2\D^\al f\D^\be f \cdot \D_\al \D_\be f.
\end{equation*}
Thus, with $w'=w+\square f/2\in C^\infty(\EE_R)$ (still to be chosen), the inequalities \eqref{Car20} and \eqref{Car21} are equivalent to the pointwise inequalities
\begin{equation}\label{Car22}
C^{-1}|D\psi|^2\leq\lambda|\D_\alpha f\cdot\D^\alpha\psi|^2+ (\D^\al\psi\cdot \D^\be\psi)(w'm_{\al\be}-\D_\al\D_\be f),
\end{equation}
and
\begin{equation}\label{Car23}
C^{-1}\leq-w' (\D_\a f \D^\a f)-\D^\al f\,\D^\be f\cdot \D_\al \D_\be f
\end{equation}
on $\EE_R$, for $\lambda$ sufficiently large.

Let $h= e^f$ or, in view of \eqref{ca6},   $h= (|x|-1/2)^2-t^2)$. In terms of $h$ making use 
 of the inequality $h\ge 1/4$ on $\EE_R$, the inequalities \eqref{Car22} and \eqref{Car23} are equivalent to
\begin{equation}\label{Car24}
C^{-1}|D\psi|^2\leq \lambda |\D_\alpha h\cdot\D^\alpha\psi|^2+ (\D^\al\psi\cdot \D^\be\psi)(w'm_{\al\be}-h^{-1}\D_\al\D_\be h),
\end{equation}
and
\begin{equation}\label{Car25}
C^{-1}\leq \D^\al h\D^\be h(h^{-2}\D_\al h\D_\be h-h^{-1}\D_\al\D_\be h)-w'\D_\al h\D^\al h,
\end{equation} 
provided that $\lambda$ is sufficiently large. To summarize, we need to find $w'\in C^\infty(\EE_R)$ such that the inequalities \eqref{Car24} and \eqref{Car25} hold in $\EE_R$, for all $\lambda$ sufficiently large.

We shall see below that  our function  $h$, strictly positive and smooth on $\EE_R$ verifies the equation,
\bea
\D^\a h \D_\a h =4h\label{eikonal-h}
\eea
We  infer  by differentiation that,
$
\D_{\a}\D_{\b} h \, \D^\b h=2 \D_\a h
$
and therefore,
\beaa
\D_{\a}\D_{\b} h\,\D^\a h\,  \D^\b h= 8h.
\eeaa
Therefore the right-hand side of \eqref{Car25} is equal to $8-4hw'$
 and thus inequality \eqref{Car25}   is equivalent to 
 $
 h w'\le 2-C^{-1}\,\, \mbox{in}\,\, \EE_R,
 $
 which is clearly satisfied   if 
\begin{equation}\label{fixw}
w'=h^{-1}(2-A_0|x|^{-1})\quad\text{ for some constant }A_0>0.
\end{equation}

 On the other hand, setting $Y^\al=\D^\al\psi$
 and $H_{\a\b}=\D_\a\D_\b h$,  $\al,\b=0,\ldots,d$
 and  observing that  $H_{0i}=0$ for $i=1,\ldots, d$,
 we infer that
 the right-hand side of \eqref{Car24} is equal to,
 \beaa
E:&=& \la (\D_\a h Y^\a)^2+ w'\big( -(Y^0)^2 +|Y'|^2)- h^{-1}\big( H_{00}( Y^0)^2
 +H_{ij} Y^i Y^j\big)\\
 &=&(Y^0)^2\big(-w'-h^{-1} H_{00}\big)+|Y'|^2\big(w'+ H_{ij} \hat{Y}^i \hat{Y}^j\big)+
  \la (\D_0 h Y^0 +\D_i h\, Y^i )^2
 \eeaa
 where $|Y'|^2=\sum_{i=1}^d (Y^i)^2$ and $\hat Y^i =|Y'|^{-1} Y^i$.
 Since
$
h=(|x|-1/2)^2-t^2,
$
we have $|h|+|h^{-1}|+|x|+(|x|-1/2)^{-1}\leq C$ in $\EE_R$. We compute
\begin{equation}\label{der1}
\D_0h=-2t,\qquad \D_jh=(2-|x|^{-1})x_j\text{ for }j=1,\ldots,d,
\end{equation}
and
\begin{equation}\label{der2}
\begin{split}
&H_{00}=\D_0\D_0h=-2\\
&H_{ij}=\D_i\D_jh=(2-|x|^{-1})\delta_{ij}+x_ix_j|x|^{-3}\text{ for }i,j=1,\ldots,d.
\end{split}
\end{equation}
Thus we easily check that \eqref{eikonal-h} is  indeed verified. 
 Setting  $Z=Y\c \hat{x}$, with $\hat x_i= \frac{x_i}{|x|}$,  the expression for $E$ becomes,
\beaa
  E&=&(Y^0)^2h^{-1} \big(2  - h w')+|Y'|^2 h^{-1}(hw'-(2-|x|^{-1}) -h^{-1}|x|^{-1}  Z^2\\
 & +&\la\big(-2t Y^0+ (2|x|-|1) Z\big)^2\\
 &=& h^{-1} A_0 |x|^{-1} (Y^0)^2+h^{-1}(1-A_0)|x|^{-1}|Y'|^2-h^{-1}|x|^{-1}  Z^2+\la\big(-2t Y^0+ (2|x|-|1) Z\big)^2
\eeaa
To  derive the bound,
\bea
E\ge C^{-1}\big((Y^0)^2+|Y'|^2\big),\label{Car24-again}
\eea
  from which\eqref{Car24}  follows,
  we rely on the following simple lemma.
\begin{lemma}
\label{le.ineq-final}
Given  $\delta>0$ there  exists $\lambda$ sufficiently large  (depending on $R$ and $\delta$) such that the following inequality holds:
\begin{equation}
\begin{split}
\lambda\Big[(2|x|-1)Z-2tY^0\Big]^2&+h^{-1}A_0|x|^{-1}(Y^0)^2-h^{-1}|x|^{-1}Z^2\\
&\geq (Y^0)^2h^{-1}|x|^{-1}\Big(A_0-\frac{t^2}{(|x|-1/2)^2}-\delta\Big),
\end{split}
\label{ineq-final}
\end{equation}
\end{lemma}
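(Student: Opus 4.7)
The plan is to recast the claimed inequality as the pointwise nonnegativity of a quadratic form in $(Z,Y^0)$, with $x,t$ treated as parameters, and then reduce it to an exact algebraic cancellation plus a $\delta$-margin. After subtracting the common term $A_0 h^{-1}|x|^{-1}(Y^0)^2$ from both sides it suffices to show that
\begin{equation*}
Q(Z, Y^0) := \lambda[aZ + b Y^0]^2 - h^{-1}|x|^{-1}Z^2 + h^{-1}|x|^{-1}\Big(\frac{t^2}{(|x|-1/2)^2} + \delta\Big)(Y^0)^2 \geq 0,
\end{equation*}
where $a = 2(|x|-1/2)$ and $b = -2t$. The whole argument will hinge on the algebraic identity
\begin{equation*}
a^{2}\cdot \frac{t^{2}}{(|x|-1/2)^{2}} \;=\; 4t^{2} \;=\; b^{2},
\end{equation*}
which is the elementary shadow of the eikonal relation $\D^\alpha h\,\D_\alpha h = 4h$ already invoked above.

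Next I would apply Sylvester's criterion to the $2\times 2$ matrix of $Q$, with $(1,1)$-entry $M := \lambda a^2 - h^{-1}|x|^{-1}$, off-diagonal entry $\lambda ab$, and $(2,2)$-entry $N := \lambda b^2 + h^{-1}|x|^{-1}(t^2/(|x|-1/2)^2 + \delta)$, thus reducing $Q\geq 0$ to the two conditions $M\geq 0$ and $MN \geq (\lambda ab)^2$. A direct expansion of $MN - \lambda^2 a^2 b^2$ yields
\begin{equation*}
\lambda h^{-1}|x|^{-1}\bigl[a^2\bigl(t^2/(|x|-1/2)^2 + \delta\bigr) - b^2\bigr] \;-\; h^{-2}|x|^{-2}\bigl(t^2/(|x|-1/2)^2 + \delta\bigr),
\end{equation*}
and by the identity above the bracket collapses to $a^2\delta = 4(|x|-1/2)^2\delta$, so the whole determinant equals $4\lambda h^{-1}|x|^{-1}(|x|-1/2)^2\delta$ minus a $\lambda$-independent remainder.

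To conclude I would appeal to the geometry of $\mathcal{E}_R$ to bound all the relevant quantities uniformly. The key observation is that on $\mathcal{E}$ the inequality $|x|>|t|+1$ forces $h = (|x|-1/2)^2 - t^2 > (|t|+1/2)^2 - t^2 = |t|+1/4 \geq 1/4$, and combined with $h < R$ it also yields $|x| < R+3/4$; consequently $h$, $|x|$, $|x|-1/2$, and $t^2/(|x|-1/2)^2$ are controlled by constants depending only on $R$. Therefore the coefficient of $\lambda$ in the determinant is bounded below by a positive multiple of $\delta$ depending only on $R$, while the negative remainder is uniformly bounded, so both $M\geq 0$ and the determinant inequality hold once $\lambda \geq \lambda(R,\delta)$ is chosen large enough. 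The real content of the lemma is the exact cancellation $a^2 t^2/(|x|-1/2)^2 = b^2$: without it the leading $\lambda^2 a^2 b^2$ in $MN$ would only barely meet $(\lambda ab)^2$ from the off-diagonal, and no amount of enlarging $\lambda$ could close the estimate. The role of $\delta>0$ is precisely to supply the linear-in-$\lambda$ surplus needed to dominate the bounded negative remainder.
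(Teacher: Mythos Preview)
Your proof is correct and follows essentially the same route as the paper's: both cancel the $A_0$ terms, reduce to nonnegativity of a quadratic form in two variables on $\mathcal{E}_R$, and hinge on the exact cancellation $a^{2}\,t^{2}/(|x|-1/2)^{2}=b^{2}$ together with the uniform bounds on $h$, $|x|$, $(|x|-1/2)^{-1}$. The only cosmetic difference is that the paper first substitutes $X=\frac{t}{|x|-1/2}Y^{0}-Z$ and then reads off positivity of the resulting quadratic in $(X,Y^{0})$, whereas you stay in the $(Z,Y^{0})$ variables and apply Sylvester's criterion directly; the underlying computation is the same.
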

In view of the lemma the bound  \eqref{Car24-again}  follows by choosing $A_0=1-C_0^{-1}$ and $\delta=C_0^{-1}$, for $C_0$ sufficiently large depending on $R$. This completes the proof of the proposition.
\end{proof}
We  give   below the proof of Lemma \ref{le.ineq-final}.
\begin{proof}
Inequality  \eqref{ineq-final}  is equivalent to,
\beaa
\lambda (2|x|-1)^2\Big[Z-\frac{t}{(|x|-1/2)}Y^0\Big]^2+
h^{1}|x|^{-1}\big(( Y^0 \frac{t}{|x|-1/2})^2-Z^2 \big) +\de h^{-1}|x|^{-1} (Y^0)^2\ge 0
\eeaa
Setting $X=\frac{t}{|x|-1/2}Y^0-Z$ we  can rewrite  the above inequality  in the form,
\beaa
\lambda (2|x|-1)^2 X^2+h^{-1}|x|^{-1}X (-X+ 2\frac{t}{|x|-1/2}) Y^0)+\de h^{-1}|x|^{-1} (Y^0)^2\ge 0
\eeaa
or, equivalently,
\beaa
X^2\big(\lambda (2|x|-1)^2-h^{-1}|x|^{-1}\big)+2\frac{t}{|x|-1/2}X Y^0
+\de h^{-1}|x|^{-1} (Y^0)^2\ge 0
\eeaa
which clearly holds for $t, x$ in $\EE_R$ and all $X, Y^0$ in $\RR$
provided that $\la$ is sufficiently large.
\end{proof}

\appendix

\section{Explicit computations in the Kerr spaces}\label{explicit}

We consider the exterior region $\mathbf{E}^4$ of the Kerr spacetime of mass $m$ and angular momentum $ma$, $a\in[0,m)$. Following \cite[Chapter 6]{Ch}, in the standard Boyer-Lindquist coordinates $(r,t,\theta,\phi)\in (r_+,\infty)\times\mathbb{R}\times(0,\pi)\times\mathbb{S}^1$, $r_\pm=m\pm (m^2-a^2)^{1/2}$, the Kerr metric on a dense open subset $\widetilde{\mathbf{E}}^4$ of $\mathbf{E}^4$ is
\begin{equation}\label{k1}
ds^2=-\frac{\rho^2\Delta}{\Sigma^2}(dt)^2+\frac{\Sigma^2(\sin\theta)^2}{\rho^2}\Big(d\phi-\frac{2amr}{\Sigma^2}dt\Big)^2+\frac{\rh^2}{\Delta}(dr)^2+\rho^2(d\theta)^2,
\end{equation}
where
\begin{equation}\label{k2}
\begin{cases}
&\Delta=r^2+a^2-2mr;\\
&\rho^2=r^2+a^2(\cos\theta)^2;\\
&\Sigma^2=(r^2+a^2)\rho^2+2mra^2(\sin\theta)^2=(r^2+a^2)^2-a^2(\sin\theta)^2\Delta.
\end{cases}
\end{equation}
This metric is of the form
\begin{equation}\label{g1}
ds^2=-e^{2\nu}(dt)^2+e^{2\psi}(d\phi-\omega dt)^2+e^{2\mu_2}(dr)^2+e^{2\mu_3}(d\theta)^2,
\end{equation}
where
\begin{equation}\label{k3}
\begin{split}
&e^{2\nu}=\frac{\rho^2\Delta}{\Sigma^2}\text{ and }\nu=\frac{1}{2}[\ln(\rho^2)+\ln\Delta-\ln(\Sigma^2)]\\
&e^{2\psi}=\frac{\Sigma^2(\sin\theta)^2}{\rho^2}\text{ and  }\psi =\frac{1}{2}[\ln(\Sigma^2)+2\ln(\sin\theta)-\ln(\rho^2)];\\
&\omega=\frac{2amr}{\Sigma^2};\\
&e^{2\mu_2}=\frac{\rh^2}{\Delta}\text{ and }\mu_2=\frac{1}{2}[\ln(\rho^2)-\ln\Delta];\\
&e^{2\mu_3}=\rho^2\text{ and }\mu_3=\frac{1}{2}\ln (\rho^2).
\end{split}
\end{equation}
We compute
\begin{equation}\label{k4}
\begin{split}
&\partial_r\mu_2=\frac{r}{\rho^2}-\frac{r-m}{\Delta}\,\text{ and }\,\partial_\theta \mu_2=\frac{-a^2\sin\theta\cos\theta}{\rho^2};\\
&\partial_r\mu_3=\frac{r}{\rho^2}\,\text{ and }\,\partial_\theta\mu_3=\frac{-a^2\sin\theta\cos\theta}{\rho^2};\\
\end{split}
\end{equation}
and
\begin{equation}\label{k4.6}
\begin{split}
&\partial_r\omega=-\frac{2am}{\Sigma^4}[(3r^2-a^2)(r^2+a^2)-a^2(\sin\theta)^2(r^2-a^2)];\\
&\partial_\theta\omega=\frac{4a^3mr\Delta \sin\theta\cos\theta}{\Sigma^4};\\
&\partial_r\nu=\frac{r}{\rho^2}+\frac{r-m}{\Delta}-\frac{2r(r^2+a^2)-a^2(\sin\theta)^2(r-m)}{\Sigma^2};\\
&\partial_\theta\nu=a^2\sin\theta\cos\theta\big(\frac{\Delta}{\Sigma^2}-\frac{1}{\rho^2}\big);\\
&\partial_r\psi=\frac{2r(r^2+a^2)-a^2(\sin\theta)^2(r-m)}{\Sigma^2}-\frac{r}{\rho^2};\\
&\partial_\theta \psi=-a^2\sin\theta\cos\theta\big(\frac{\Delta}{\Sigma^2}-\frac{1}{\rho^2}\big)+\frac{\cos\theta}{\sin\theta}.
\end{split}
\end{equation}
We fix the frame
\begin{equation}\label{g3}
e_0=e^{-\nu}(\partial_t+\omega\partial_\phi),\,\,e_1=e^{-\psi}\partial_\phi,\,\,e_2=e^{-\mu_2}\partial_r,\,\,e_3=e^{-\mu_3}\partial_\theta.
\end{equation}
Clearly, ${\g}_{\al\be})=({\g}^{\al\be})=\mathrm{diag}(-1,1,1,1)$, where ${\g}_{\al\be}={\g}(e_\al,e_\be)$, $\al,\be=0,1,2,3$. The dual basis of $1$-forms is
\begin{equation}\label{g3.1}
\eta^0=e^\nu dt,\,\,\eta^1=e^\psi(d\phi-\omega dt),\,\,\eta^2=e^{\mu_2}dr,\,\,\eta^3=e^{\mu_3}d\theta.
\end{equation}
Also
\begin{equation}\label{g10}
\xi=\partial_t=e^\nu\cdot e_0-e^\psi \omega\cdot e_1.
\end{equation}

We compute now the covariant derivatives ${\D}_{e_i}e_j$, $i,j=0,1,2,3$. We use the formula
\begin{equation}\label{g5}
\begin{split}
{\g}(Z,{\D}_YX)=&\frac{1}{2}\big(X({\g}(Y,Z))+Y({\g}(Z,X))-Z({\g}(X,Y))\\
&-{\g}([X,Z],Y)-{\g}([Y,Z],X)-{\g}([X,Y],Z)\big),
\end{split}
\end{equation}
for any vector fields $X,Y,Z$. We have
\begin{equation}\label{g6}
\begin{split}
&[e_0,e_1]=0;\\
&[e_0,e_2]=e^{-\mu_2}\partial_r\nu\cdot e_0-e^{\psi-\mu_2-\nu}\partial_r\omega\cdot e_1;\\
&[e_0,e_3]=e^{-\mu_3}\partial_\theta\nu\cdot e_0-e^{\psi-\mu_3-\nu}\partial_\theta\omega\cdot e_1;\\
&[e_1,e_2]=e^{-\mu_2}\partial_r\psi\cdot e_1;\\
&[e_1,e_3]=e^{-\mu_3}\partial_\theta\psi\cdot e_1;\\
&[e_2,e_3]=e^{-\mu_3}\partial_\theta\mu_2\cdot e_2-e^{-\mu_2}\partial_r\mu_3\cdot e_3.\\
\end{split}
\end{equation}
With $[e_i,e_j]=C^k_{ij}e_k$, $C_{ij}^k+C_{ji}^k=0$, it follows from \eqref{g5}
that
\begin{equation}\label{g7.3}
\D_{e_j}e_i=-\frac{1}{2}\sum_{k=0}^3(\g_{jj}\g_{kk}C_{ik}^j+\g_{ii}\g_{kk}C_{jk}^i+C_{ij}^k)e_k.
\end{equation}
Using the table \eqref{g6}, this gives
\begin{equation}\label{g8}
\begin{split}
&\D_{e_0}e_0=C_{02}^0e_2+C_{03}^0e_3;\,\,\D_{e_1}e_0=\frac{-1}{2}C_{02}^1e_2+\frac{-1}{2}C_{03}^1e_3;\\
&\D_{e_2}e_0=\frac{-1}{2}C_{02}^1e_1;\,\,\D_{e_3}e_0=\frac{-1}{2}C_{03}^1e_1;\\
&\D_{e_0}e_1=\frac{-1}{2}C_{02}^1e_2+\frac{-1}{2}C_{03}^1e_3;\,\,\D_{e_1}e_1=(-1)C_{12}^1e_2+(-1)C_{13}^1e_3\\
&\D_{e_2}e_1=\frac{-1}{2}C_{02}^1e_0;\,\,\D_{e_3}e_1=\frac{-1}{2}C_{03}^1e_0;\\
&\D_{e_0}e_2=C_{02}^0e_0+\frac{1}{2}C_{02}^1e_1;\,\,\D_{e_1}e_2=\frac{-1}{2}C_{02}^1e_0+C_{12}^1e_1;\\
&\D_{e_2}e_2=-C_{23}^2e_3;\,\,\D_{e_3}e_2=-C_{23}^3e_3;\\
&\D_{e_0}e_3=C_{03}^0e_0+\frac{1}{2}C_{03}^1e_1;\,\,\D_{e_1}e_3=\frac{-1}{2}C_{03}^1e_0+C_{13}^1e_1;\\
&\D_{e_2}e_3=C_{23}^2e_2;\,\,\D_{e_3}e_3=C_{23}^3e_2.
\end{split}
\end{equation}

Let
\begin{equation*}
Y=2r(r^2+a^2)-a^2(\sin\theta)^2(r-m),
\end{equation*}
and observe that
\begin{equation}\label{de5}
(3r^2-a^2)(r^2+a^2)-a^2(\sin\theta)^2(r^2-a^2)=2rY-\Sigma^2>0,
\end{equation}
and
\begin{equation}\label{de6}
2r\Sigma^2>\rho^2 Y.
\end{equation}

We compute now the Hessian $\D^2 r$. More generally, for a function $f$ that depends only on $r$ (i.e. $e_0(f)=e_1(f)=e_3(f)=0$), using \eqref{g6} and \eqref{g8}, and the formula $\D_\al\D_\be f=\D_\be\D_\al f=e_\alpha(e_\beta(f))-\D_{e_\al}e_\be(f)$, 
\begin{equation}\label{de1}
\begin{split}
&\D_0\D_0 f=-C_{02}^0e^{-\mu_2}\partial_rf=-\frac{\Delta}{\rho^2}\Big(\frac{r}{\rho^2}+\frac{r-m}{\Delta}-\frac{Y}{\Sigma^2}\Big)\partial_rf\\
&\D_0\D_1 f=\frac{1}{2}C_{02}^1e^{-\mu_2}\partial_rf=\frac{\Delta}{\rho^2}\cdot \frac{ma\sin\theta}{\rho^2\sqrt{\Delta}\Sigma^2}(2rY-\Sigma^2)\partial_rf\\
&\D_1\D_1 f=C_{12}^1e^{-\mu_2}\partial_rf=\frac{\Delta}{\rho^2}\Big(\frac{Y}{\Sigma^2}-\frac{r}{\rho^2}\Big)\partial_rf\\
&\D_2\D_2 f=e^{-\mu_2}\partial_r(e^{-\mu_2}\partial_rf)=\frac{\Delta}{\rho^2}\partial^2_rf-\frac{\Delta}{\rho^2}\Big(\frac{r}{\rho^2}-\frac{r-m}{\Delta}\Big)\partial_rf\\
&\D_2\D_3 f=-C_{23}^2e^{-\mu_2}\partial_rf=\frac{\sqrt{\Delta}a^2\sin\theta\cos\theta}{\rho^4}\partial_rf\\
&\D_3\D_3 f=-C_{23}^3e^{-\mu_2}\partial_rf=\frac{\Delta r}{\rho^4}\partial_rf\\
&\D_0\D_2 f=\D_0\D_3 f=\D_1\D_2 f=\D_1\D_3 f=0.
\end{split}
\end{equation}

\end{document}